\documentclass[a4paper,11pt]{article}
\usepackage[margin=1in]{geometry}
\usepackage{amsmath}  
\usepackage{xcolor}
\usepackage{bbm}
\usepackage{changepage}
\usepackage{verbatim}
\usepackage{microtype}
\usepackage{graphicx}
\usepackage{subcaption}
\usepackage{booktabs} 



\usepackage{amsthm}
\usepackage{amsmath,amssymb,amsfonts}
\usepackage{algorithm}
\usepackage{algorithmic}
\usepackage{enumerate}

\newtheorem{definition}{Definition}[]
\newtheorem{proposition}{Proposition}
\newtheorem{assumption}{Assumption}[]

\newtheorem{theorem}{Theorem}[]

\newtheorem{lemma}[]{Lemma}

\def\x{{\mathbf x}}
\def\z{{\mathbf z}}

\def\y{{\mathbf y}}

\def\G{{\mathcal G}}
\def\E{{\mathcal E}}

\def\R{{\mathcal{R}}}

\newcommand{\ts}{\textsuperscript}


\author{Yiyue Chen,~\IEEEmembership{Student Member,~IEEE,} Abolfazl~Hashemi,~\IEEEmembership{Student Member,~IEEE,} and Haris~Vikalo,~\IEEEmembership{Senior Member,~IEEE,} and 
 \thanks{Yiyue Chen, Abolfazl Hashemi, and Haris Vikalo are with the Department of Electrical and Computer Engineering, 
 University of Texas at Austin, Austin, TX 78712 USA. This work was supported in part by NSF grant 1809327.}
 }

\begin{document}
\title{Decentralized Optimization On Time-Varying Directed Graphs Under Communication Constraints}
\date{}
\author{Yiyue Chen, Abolfazl Hashemi, Haris Vikalo\thanks{Yiyue Chen and Haris Vikalo  are with the Department of Electrical and Computer Engineering, University of Texas at Austin, Austin, TX 78712 USA. Abolfazl Hashemi is with the Oden Institute for Computational Engineering and Sciences, 
 University of Texas at Austin, Austin, TX 78712 USA. This work was supported in part by NSF grant 1809327.}}

\maketitle
\begin{abstract}
	We consider the problem of decentralized optimization where a collection of agents, each having access to a local cost 
function, communicate over a time-varying directed network and aim to minimize the sum of those functions. In practice, 
the amount of information that can be exchanged between the agents is limited due to communication constraints. We 
propose a communication-efficient algorithm for decentralized convex optimization that rely on sparsification of local 
updates exchanged between neighboring agents in the network. In directed networks, message sparsification alters 
column-stochasticity -- a property that plays an important role in establishing convergence of decentralized learning tasks. 
We propose a decentralized optimization scheme that relies on local modification of mixing matrices, and show that it 
achieves $\mathcal{O}(\frac{\mathrm{ln}T}{\sqrt{T}})$ convergence rate in the considered settings. Experiments validate theoretical
results and demonstrate efficacy of the proposed algorithm.
	
\end{abstract}

\section{Introduction}\label{sec:intro}
In recent years, decentralized optimization has attracted considerable interest from the machine learning, 
signal processing, and control communities 
\cite{ren2005consensus,ren2007information,nedic2014distributed,mcmahan2017communication}. 
We consider the setting where a collection of agents attempts to minimize an objective that consists of 
functions distributed among the agents; each agent evaluates one of the functions on its local data.
Formally, this optimization task can be stated as
\begin{equation}\label{eq:prob}
\min_{\mathbf{x} \in \mathbb{R}^d} \left[f(\mathbf{x}):=\frac{1}{n}\sum_{i=1}^n f_i(\mathbf{x})\right],
\end{equation}
where $n$ is the number of agents and $f_i: \mathbb{R}^d \to \mathbb{R}$ is the function assigned to
the $i\ts{th}$ node, $i \in [n]:=\left\{1, ..., n \right\}$. The agents collaborate by exchanging information
over a network modeled by a time-varying directed graph  $\G(t)=(|n|, \E(t))$, where $\E(t)$ denotes the
set of edges at time $t$; agent $i$ can send a message to agent $j$ at time $t$ if there exist an
edge from $i$ to $j$ at $t$, i.e., if $\left\{i, j \right\} \in \E(t)$.

The described setting has been a subject of extensive studies over the last decade, leading to a number
of seminal results \cite{nedic2009distributed, johansson2010randomized,wei2012distributed,duchi2011dual, nedic2015decentralized,stich2018sparsified,koloskova2019decentralized}. Majority of prior work assumes
symmetry in the agents' communication capabilities, i.e., models the problem using undirected graphs. 
However, the assumption of symmetry is often violated and the graph that captures properties of the 
communication network should be directed. Providing provably convergent decentralized convex optimization 
schemes over directed graphs 
is challenging; technically, this stems from the fact that unlike in undirected graphs, the so-called mixing 
matrix of a directed graph is not doubly stochastic. The existing prior work in the directed graph settings
includes the grad-push algorithm \cite{kempe2003gossip,nedic2014distributed}, which compensates for the 
imbalance in a column-stochastic mixing matrix by relying on local normalization scalars, and the directed 
distributed gradient descent (D-DGD) scheme \cite{xi2017distributed} which carefully tracks link changes
over time and their impact on the mixing matrices. Assuming convex local function, both of these methods
achieve $\mathcal{O}(\frac{\mathrm{ln}T}{\sqrt{T}})$ convergence rate. 

In practice, communication bandwidth is often limited and thus the amount of information that can be
exchanged between the agents is restricted. This motivates design of decentralized optimization schemes
capable of operating under communication constraints; none of the aforementioned methods considers
such settings. Recently, techniques that address communication constraints in decentralized 
optimization by quantizing or sparsifying messages exchanged between participating agents have been 
proposed in literature \cite{wen2017terngrad, zhang2017zipml,stich2018sparsified}. Such schemes have 
been deployed in the context of decentralized convex optimization over undirected networks 
\cite{koloskova2019decentralized} as well as in {\it fixed} directed networks \cite{taheriquantized}.
However, there has been no prior work on communication-constrained decentralized learning over 
time-varying directed networks.

In this paper we propose, to our knowledge the first, communication-sparsifying scheme for decentralized convex 
optimization over {\it directed} networks, and provide formal guarantees of its convergence; in particular, 
we show that the proposed method achieves $\mathcal{O}(\frac{\mathrm{ln}T}{\sqrt{T}})$ convergence rate. Experiments
demonstrate efficacy of the proposed scheme.
\section{Problem Setting}\label{sec:problem}
Assume that a collection of agents aims to collaboratively find the unique solution to decentralized convex optimization 
\eqref{eq:prob}; let us denote this solution by $\x^*$ and assume, for simplicity, that $\mathcal{X} = \mathbb{R}^d$. The
agents, represented by nodes of a directed time-varying graph, are allowed to exchange sparsified messages. In the
following, we do not assume smoothness or strong convexity of the objective; however, our analysis can be extended to
such settings.

Let $W_{in}^t$ (row-stochastic) and $W_{out}^t$ (column-stochastic) denote the in-neighbor and 
out-neighbor connectivity matrix at time $t$, respectively. 
Moreover, let $\mathcal{N}_{{in},i}^t$ be the set of nodes that can send information to node $i$ (including $i$), and 
$\mathcal{N}_{out,j}^t$ the set of nodes that can receive information from node $j$  (including $j$) at time $t$. We 
assume that both $\mathcal{N}_{{in},i}^t$ and $\mathcal{N}_{out,i}^t$ are known to node $i$. A simple policy for 
designing $W^t_{in}$ and $W^t_{out}$ is to set
\begin{equation}\label{eq:Ws}
[W^t_{in}]_{ij} = 1/|\mathcal{N}^t_{in, i}|, \quad [W^t_{out}]_{ij}= 1/|\mathcal{N}^t_{out, j}|.
\end{equation}
We assume that the constructed mixing matrices have non-zero spectral gaps; this is readily satisfied in a variety of
settings including when the union graph is jointly-connected. Matrices $W_{in}^t$ and $W_{out}^t$ can be used to
synthesize the mixing matrix, as formally stated in Section \ref{sec:alg} (see Definition~1). 


To reduce the size of the messages exchanged between agents in a network, we perform {\it sparsification}.
In particular, each node uniformly at random selects and 
communicates $k$ out of $d$ entries of a $d$-dimensional message. To formalize
this, we introduce a sparsification operator $Q: \mathbb{R}^d \rightarrow \mathbb{R}^d$. The operator $Q$ is
biased, i.e., $\mathbb{E}[Q(\x)] \neq \x$, and has variance that depends on the norm of its argument, 
$\mathbb{E}[\|Q(\x)-\x\|^2] \propto \|\x\|^2$. Biased compression operators have previously been considered in the
context of time-invariant networks 
\cite{stich2018sparsified,koloskova2019decentralized,koloskova2019decentralizeda,taheriquantized} but are not
encountered in time-varying network settings.

\section{Compressed Time-Varying Decentralized Optimization}\label{sec:alg}

A common strategy to solving decentralized optimization problems is to orchestrate exchange of messages between agents
such that each message consists of a combination of compressed messages from neighboring nodes and a gradient noise
term. The gradient term is rendered vanishing by adopting a decreasing stepsize scheme; this ensures that the agents in the
network reach a consensus state which is the optimal solution to the optimization problem.

To meet communication constraints, messages may be sparsified; however, simplistic introduction of sparsification to the 
existing methods, 
e.g., \cite{kempe2003gossip,cai2012average,cai2014average,nedic2014distributed}, may have adverse effect on their
convergence -- indeed, modified schemes may only converge to a neighborhood of the optimal solution or even end up 
diverging. This
is caused by the non-vanishing error due to the bias and variance of the sparsification operator. We note that the impact of sparsification on the entries of a state vector in the network can be interpreted as that of link failures; this motivates us to 
account for it in the structure of the connectivity matrices. Specifically, we split the vector-valued decentralized problem into 
$d$ individual scalar-valued sub-problems with the coordinate in-neighbor and out-neighbor connectivity matrices, 
$\{W_{in,m}^t\}_{m=1}^d$ and $\{W_{out,m}^t\}_{m=1}^d$, specified for each time $t$. If an entry is sparsified at time $t$ 
(i.e., set to zero 
and not communicated), the corresponding coordinate connectivity matrices are no longer stochastic. To handle this
issue, we {\it re-normalize} the connectivity matrices $\{W_{in,m}^t\}_{m=1}^d$ and $\{W_{out,m}^t\}_{m=1}^d$, ensuring 
their row stochasticity and column stochasticity, respectively; node $i$ performs re-normalization of the $i\ts{th}$ row of 
$\{W_{in,m}^t\}_{m=1}^d$ and $i\ts{th}$ column of $\{W_{out,m}^t\}_{m=1}^d$ locally. We denote by $\{A_{m}^t\}_{m=1}^d$ 
and $\{B_{m}^t\}_{m=1}^d$ the weight matrices resulting from the re-normalization of $\{W_{in,m}^t\}_{m=1}^d$ and 
$\{W_{out,m}^t\}_{m=1}^d$, respectively. 


Following the work of \cite{cai2012average} on average consensus, we introduce an auxiliary vector 
$\mathbf{y}_i\in \mathbb{R}^d$ for each node. Referred to as the surplus vector, $\mathbf{y}_i\in \mathbb{R}^d$ records 
variations of the state vectors over time and is used to help ensure the state vectors approach the consensus state.
At time step $t$, node $i$ compresses $\mathbf{x}_i^t$ and $\mathbf{y}_i^t$ and sends both to the current out-neighbors. 
To allow succinct expression of the update rule, we introduce $\mathbf{z}_i^t \in \mathbb{R}^d$ defined as
\begin{equation}\label{eq:zdef}
\mathbf{z}_i^t = \begin{cases}
\mathbf{x}_i^t, & i \in \left\{1, ..., n \right\} \\
\mathbf{y}_{i-n}^t, & i \in \left\{n+1, ..., 2n \right\}.
\end{cases}
\end{equation}
The sparsification operator $Q(\cdot)$ is applied to $\mathbf{z}_i^t$, resulting in $Q(\mathbf{z}_i^t)$; we denote the
$m\ts{th}$ entry of the sparsified vector by $[Q({\z}_{i}^t)]_m$. The aforementioned weight matrix $A^t_m$ is formed
as
\begin{equation}\label{eq:normA}
[A^t_m]_{ij}=\begin{cases}
\frac{[W^t_{in, m}]_{ij}}{\sum_{j\in\mathcal{S}_m^t(i,j)} [W^t_{in, m}]_{ij}} & \text{if } j\in\mathcal{S}_m^t(i,j)\\
0 & \mathrm{otherwise},
\end{cases}
\end{equation}
where $\mathcal{S}_m^t(i,j) := \{j|j\in\mathcal{N}^t_{in,i},[Q({\z}_{j}^t)]_m \neq 0\}\cup\{i\}$. Likewise, $B^t_m$ is defined 
as
\begin{equation}\label{eq:normB}
[B^t_m]_{ij}=\begin{cases}
\frac{[W^t_{out, m}]_{ij}}{\sum_{i\in\mathcal{T}_m^t(i,j)} [W^t_{out, m}]_{ij}} & \text{if } i\in\mathcal{T}_m^t(i,j)\\
0 & \mathrm{otherwise},
\end{cases}
\end{equation}
where $\mathcal{T}_m^t(i,j) := \{i|i\in\mathcal{N}^t_{out,j},[Q({\z}_{i}^t)]_m \neq 0\}\cup\{j\}$.

To obtain the update rule for the optimization algorithm, we first need to define the {\it mixing matrix} of a directed network with sparsified messages. 

\begin{definition}
At time $t$, the $m\ts{th}$ mixing matrix of a time-varying directed network deploying sparsified messages, 
$\Bar{M}_m^t \in \mathbb{R}^{2n \times 2n}$, is a matrix with eigenvalues $1=|\lambda_1(\Bar{M}_m^t)| = |\lambda_2(\Bar{M}_m^t)| \geq |\lambda_3(\Bar{M}_m^t)| \geq \cdots |\lambda_{2n}(\Bar{M}_m^t)|$ that is constructed from the current network topology as
\begin{equation}\label{matrixdefbar}
\Bar{M}_m^t=\left[\begin{matrix}
   A_m^t & \mathbf{0} \\
   I-A_m^t & B_m^t \\
  \end{matrix}
\right],
\end{equation}
where $A_m^t$ and $B_m^t$ represent the $m\ts{th}$ normalized in-neighbor and out-neighbor connectivity matrices at time $t$, respectively. 
\end{definition}

With $\mathbf{z}_i^t$ and $\bar{M}_m^t$ defined in \eqref{eq:zdef} and \eqref{matrixdefbar}, respectively, node $i$ updates
the $m\ts{th}$ component of its message according to
\begin{equation}\label{opti_update}
    \begin{aligned}
    z_{im}^{t+1} &=\sum_{j=1}^{2n}[\bar{M}^t_m]_{ij} [Q(\z_{j}^t)]_m  + \mathbbm{1}_{\left\{t\ \text{mod}\ \mathcal{B} = \mathcal{B}-1\right\}} \epsilon [F]_{ij} z_{jm}^{\mathcal{B} \lfloor t/\mathcal{B} \rfloor} \\
    & \quad - \mathbbm{1}_{\left\{t\ \text{mod}\ \mathcal{B} = \mathcal{B}-1\right\}} \alpha_{ \lfloor t/\mathcal{B} \rfloor} g_{im}^{\mathcal{B} \lfloor t/\mathcal{B} \rfloor},
    \end{aligned}
\end{equation}
where $g_{im}^t$ denotes the $m\ts{th}$ entry of the gradient vector $\mathbf{g}_i^t$ constructed as
\begin{equation}
    \mathbf{g}_i^t =\begin{cases}
\nabla f_i(\mathbf{x}_i^t), & i \in \left\{1, ..., n \right\}\\
\mathbf{0}, & i \in \left\{n+1, ..., 2n \right\}.
\end{cases}
\end{equation}
Moreover, $F =\left[ \begin{matrix} \mathbf{0} & I \\
\mathbf{0} & -I 
\end{matrix} \right]$, and $\alpha_t$ is the stepsize at time $t$. 

In \eqref{opti_update}, the update of vectors $\z_i^t$ consists of a mixture of the compressed state vectors and surplus 
vectors, and includes a vanishing gradient computed from history. The mixture of compressed messages can 
be interpreted as obtained by sparsification and multiplication with the mixing matrix from the previous time steps, 
except for the times when 
\begin{equation}\label{eq:mod}
    t \mod \mathcal{B} = \mathcal{B}-1.
\end{equation}
When $t$ satisfies \eqref{eq:mod}, the update of $\z_i^t$ incorporates stored vectors 
$\z_i^{\mathcal{B}\lfloor t/\mathcal{B} \rfloor}$. Note that $\z_i^{\mathcal{B}\lfloor t/\mathcal{B} \rfloor}$ is multiplied by 
$\epsilon F$, where the perturbation parameter $\epsilon$ determines the extent $F$ affects the update. One can show
that $\epsilon F$, in combination with the mixing matrix $\Bar{M}_m^t$, guarantees non-zero spectral gap of the product 
matrix over $\mathcal{B}$ consecutive time steps starting from $t = k\mathcal{B}$. Similarly, gradient term 
$\alpha_{ \lfloor t/\mathcal{B} \rfloor} g_{im}^{\mathcal{B} \lfloor t/\mathcal{B} \rfloor} $, computed using state vectors 
$\x_i^{t - (\mathcal{B}-1)}$, participates in the update when \eqref{eq:mod} holds. We formalize the proposed 
procedure as Algorithm \ref{alg:B}. 

{\color{purple}{
 \begin{algorithm}[t]
\caption{Communication-Sparsifying Jointly-Connected Gradient Descent}
\label{alg:B}
\begin{algorithmic}[1]
\STATE {\bfseries Input:} 
   $T$, 
   $\epsilon$, 
   $\mathbf{x}^0$, 
   $\mathbf{y}^0=\mathbf{0}$, 
   
\STATE {set $\mathbf{z}^0=[\mathbf{x}^0; \mathbf{y}^0]$} 
\FOR{each $t \in [0, 1,..., T]$}
\STATE generate non-negative matrices $W^t_{in}$, $W^t_{out}$

\FOR{each $m \in [1, ..., d]$}
\STATE construct a row-stochastic $A^t_m$  and a column-stochastic $B^t_m$ according to \eqref{eq:normA} and \eqref{eq:normB}

\STATE construct $\bar{M}^t_m$ according to \eqref{matrixdefbar}
\FOR{each $i \in \left\{1, ..., 2n \right\}$}
\STATE Update $z_{im}^{t+1}$ according to \eqref{opti_update}
\ENDFOR
\ENDFOR
\ENDFOR

\end{algorithmic}
\end{algorithm}
}
}

{\bf Remark.} It is worth pointing out that in Algorithm \ref{alg:B} each node needs to store local messages of size $4d$
(four $d$-dimensional vectors: the current state and surplus vectors, past surplus vector, and local gradient 
vector). Only the two current vectors may be communicated to the neighboring nodes while the other two 
vectors are used locally when \eqref{eq:mod} holds. Note that $\bar{M}_m^t$ has column sum equal to one but it is not 
column-stochastic due to having negative entries. Finally, note that when $\mathcal{B} = 1$, the network is strongly 
connected at all times. 

\begin{figure*}[htbp]
	\centering
	\minipage[t]{1\linewidth}
	\begin{subfigure}[t]{.245\linewidth}
		\includegraphics[width=\textwidth]{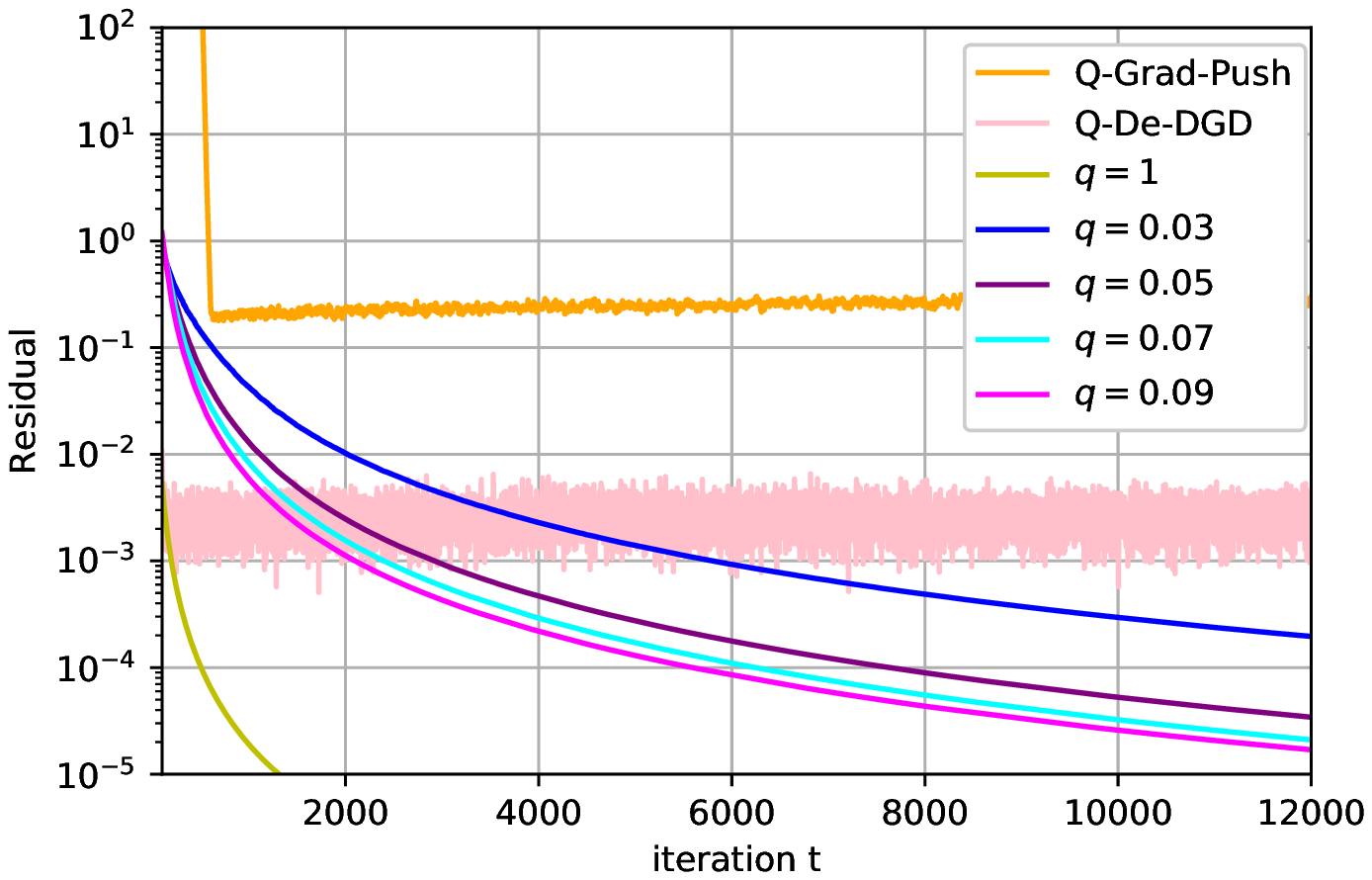}
		\caption{\footnotesize  Residual: $\mathcal{B} = 1$}
	\end{subfigure}
	\begin{subfigure}[t]{.245\linewidth}
		\includegraphics[width=\linewidth]{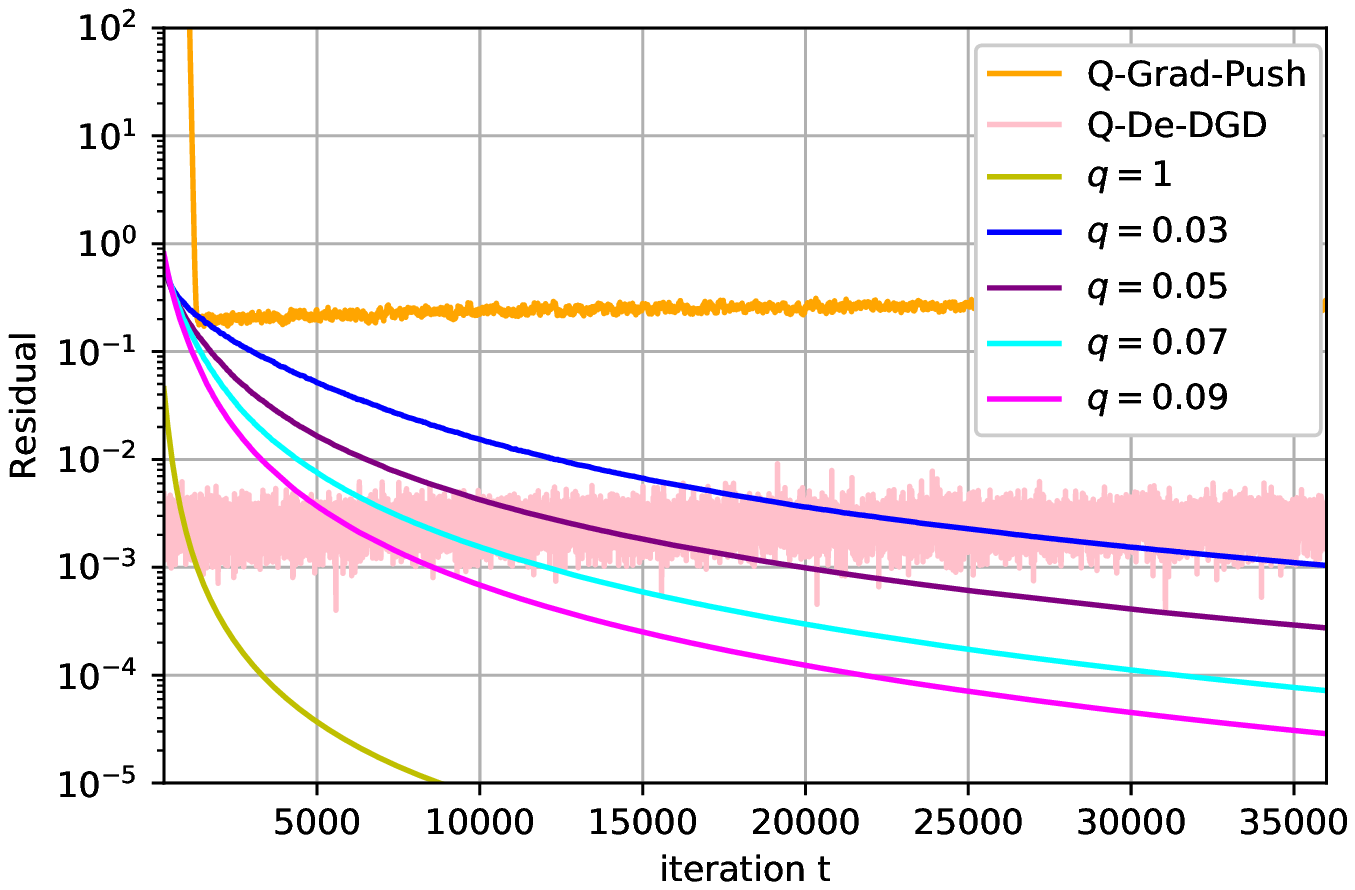}
		\caption{\footnotesize Residual: $\mathcal{B} = 3$}
	\end{subfigure}
		\begin{subfigure}[t]{.245\linewidth}
		\includegraphics[width=\textwidth]{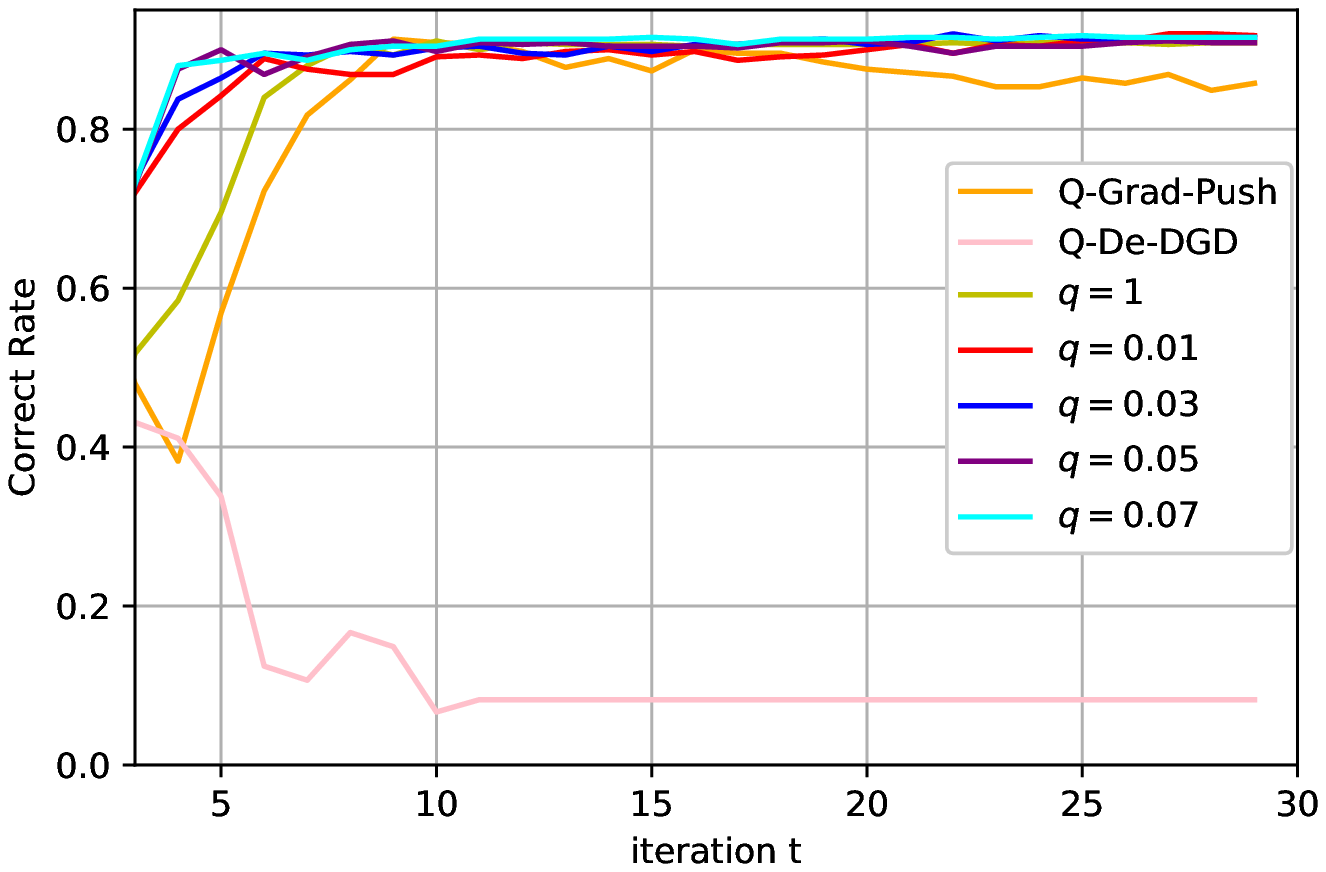}
		\caption{\footnotesize  Correct rate: $\mathcal{B} = 1$}
	\end{subfigure}
	\begin{subfigure}[t]{.245\linewidth}
		\includegraphics[width=\linewidth]{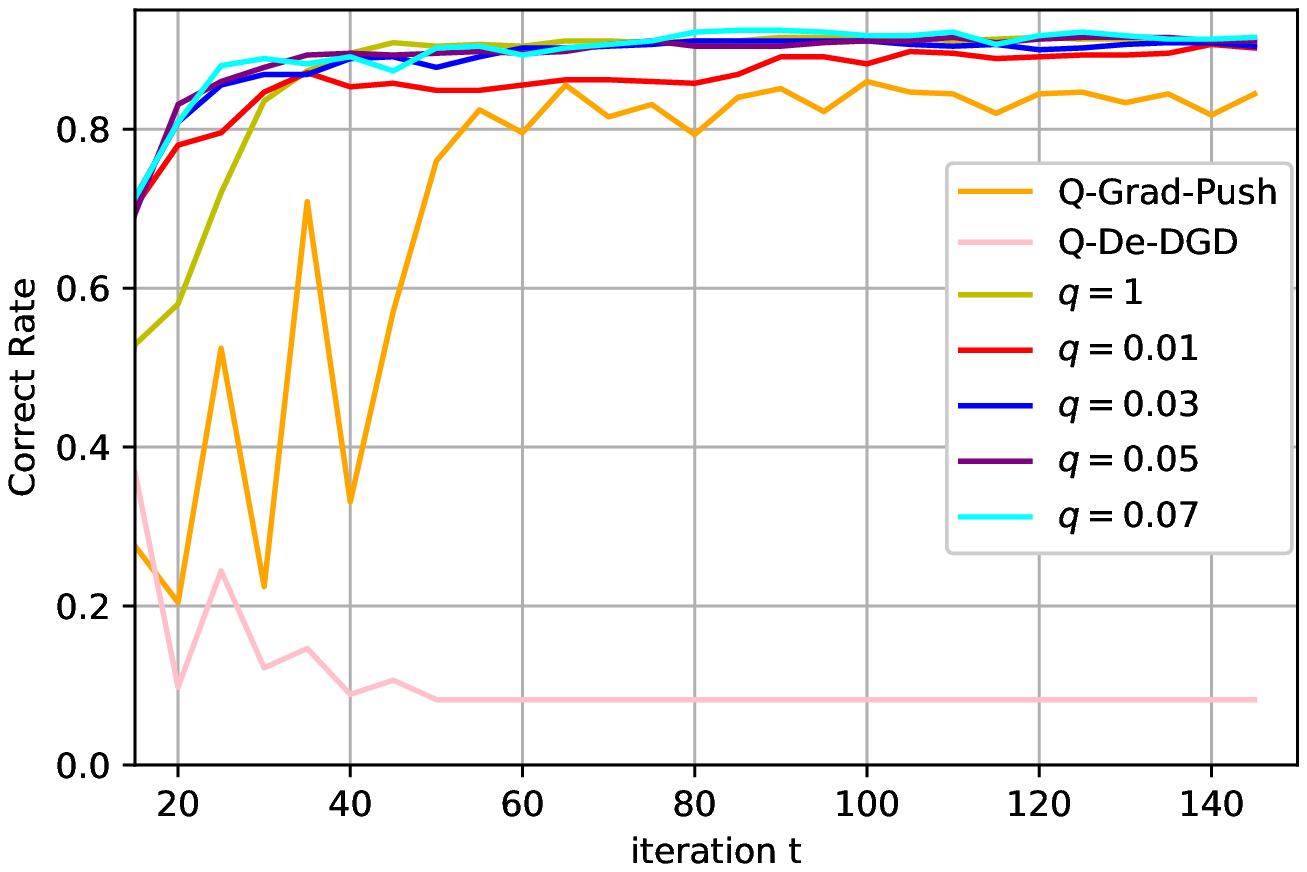}
		\caption{\footnotesize Correct rate: $\mathcal{B} = 5$}
	\end{subfigure}
	\caption{Linear regression on a jointly connected network with $\mathcal{B} = 1, 3, \epsilon = 0.05$, see (a), (b); logistic regression on a jointly connected network with $\mathcal{B} = 1, 5, \epsilon = 0.01$, see (c), (d).
	}
	\label{fig:linear_regression}
	\endminipage 
		\vspace{-0.4cm}
\end{figure*}

\subsection{Convergence Analysis}

Let $\bar{M}_m(T:s) = \bar{M}_m^{T} \bar{M}_m^{T-1} \cdots \bar{M}_m^s$ denote the product of a sequence of consecutive 
mixing matrices from time $s$ to $T$, with the superscript indicating the time and the subscript indicating the entry position. 
The perturbed product, $M_m((k+1)\mathcal{B}-1:k\mathcal{B}) $, is obtained from adding the perturbation term $\epsilon F$ 
to the product of mixing matrices as
\begin{equation}\label{mproduct}
M_m((k+1)\mathcal{B}-1:k\mathcal{B}) = \bar{M}_m((k+1)\mathcal{B}-1:k\mathcal{B}) + \epsilon F. \end{equation}
To proceed, we require the following assumptions.
\begin{assumption}\label{assumptionAll}
The mixing matrices, stepsizes, and the local objectives satisfy:
\begin{enumerate}[(i)]
\item $\forall k \geq 0, 1 \leq m \leq d$, there exists some $0 < \epsilon_0 < 1$ such that the perturbed product, $M_m((k+1)\mathcal{B}-1:k\mathcal{B})$ has a non-zero spectral gap $\forall \epsilon$ such that $0 < \epsilon < \epsilon_0$.
\item  For a fixed $\epsilon \in (0,1)$, the set of all possible mixing matrices $\{\bar{M}_m^t\}$ is a finite set.
\item The sequence of stepsizes, $\left\{\alpha_t\right\}$, is non-negative and satisfies $\sum_{t=0}^{\infty}\alpha_t=\infty, \   \sum_{t=0}^{\infty}\alpha_t^2<\infty$.
\item $\forall 1 \leq i \leq n, 1 \leq m \leq d, t \geq 0$, there exists some $D>0$ such that $|g_{im}^t|<D$.
\end{enumerate}
\end{assumption}

Given the weight matrices scheme in \eqref{eq:Ws}, assumptions (i) and (ii) hold for a variety of network structures. 
Assumptions (iii) and (iv) are common in decentralized optimization \cite{nedic2009distributed,nedic2014distributed,xi2017distributed}) and help guide nodes in the network to a consensus 
that approaches the global optimal solution. We formalize our main theoretical results in Theorem~\ref{theorem3}, which 
establishes convergence of Algorithm \ref{alg:B} to the optimal solution. The details of the proof of the theorem is stated in Section A of the appendix.


\begin{theorem}\label{theorem3}
Suppose Assumption \ref{assumptionAll} holds. Let $\x^*$ be the unique optimal solution and $f^* = f(\x^*)$. Then
\begin{equation}
    \begin{aligned}
    2\sum_{k=0}^{\infty}\alpha_k (f(\Bar{\mathbf{z}}^{k\mathcal{B}})-f^*) & \leq n\|\Bar{\mathbf{z}}^0 -\mathbf{x}^*\| + nD'^2\sum_{k=0}^{\infty}\alpha_k^2\\&\quad +\frac{4D'}{n}\sum_{i=1}^n \sum_{k=0}^{\infty} \alpha_k \| \mathbf{z}_i^{k\mathcal{B}}-\Bar{\mathbf{z}}^{k\mathcal{B}}\|,
\end{aligned}  
\end{equation}
where $D' = \sqrt{d}D$ and $\bar{\z}^t = \frac{1}{n}\sum_{i=1}^n \x_i^t + \frac{1}{n}\sum_{i=1}^n \y_i^t$. 
\end{theorem}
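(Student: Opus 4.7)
The plan is to follow a standard decentralized-subgradient analysis adapted to the block structure of $\bar{M}_m^t$ and the perturbation $\epsilon F$. The three core ingredients are: conservation of the sum $\sum_{i=1}^{2n}\z_i^t$ over one block of length $\mathcal{B}$, the convexity plus Lipschitz bound on each $f_i$ (with Lipschitz constant $D'=\sqrt{d}D$ inherited from Assumption~\ref{assumptionAll}(iv)), and a telescoping argument over the block index $k$.

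\textbf{Step 1 (averaged dynamics).} I first want to establish the averaged recursion
\begin{equation*}
\bar{\z}^{(k+1)\mathcal{B}} \;=\; \bar{\z}^{k\mathcal{B}} - \frac{\alpha_k}{n}\sum_{i=1}^{n}\nabla f_i(\x_i^{k\mathcal{B}}).
\end{equation*}
This is obtained by summing the componentwise update \eqref{opti_update} over $i=1,\dots,2n$ and using two structural facts. First, every column of $\bar{M}_m^t$ sums to one: in the $\x$-block the column sum of $A_m^t$ and of $I-A_m^t$ add to one, and in the $\y$-block $B_m^t$ is column-stochastic, so whatever mass $A_m^t$ drops after the sparsification-aware renormalization \eqref{eq:normA} is rerouted by the $(I-A_m^t)$ block into the surplus variable. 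Second, every column of $F$ sums to zero, so the occasional $\epsilon F$ correction does not alter the aggregate. Between refresh instants the sum is therefore preserved exactly; at the refresh step $t=(k+1)\mathcal{B}-1$ the only nontrivial change is the gradient term, whose bottom-half contribution vanishes since $\mathbf{g}_i^t=\mathbf{0}$ for $i>n$.

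\textbf{Step 2 (per-block descent inequality).} Using Step~1, I expand
\begin{equation*}
\|\bar{\z}^{(k+1)\mathcal{B}}-\x^*\|^2 = \|\bar{\z}^{k\mathcal{B}}-\x^*\|^2 - \tfrac{2\alpha_k}{n}\textstyle\sum_{i=1}^n\langle \nabla f_i(\x_i^{k\mathcal{B}}),\,\bar{\z}^{k\mathcal{B}}-\x^*\rangle + \tfrac{\alpha_k^2}{n^2}\bigl\|\textstyle\sum_{i=1}^n\nabla f_i(\x_i^{k\mathcal{B}})\bigr\|^2.
\end{equation*}
The squared-gradient remainder is bounded by $\alpha_k^2 D'^2$ using $|g_{im}^t|\le D$, hence $\|\nabla f_i\|\le D'$. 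For the inner product, I split $\bar{\z}^{k\mathcal{B}}-\x^*=(\bar{\z}^{k\mathcal{B}}-\x_i^{k\mathcal{B}})+(\x_i^{k\mathcal{B}}-\x^*)$. Convexity of $f_i$ gives $\langle \nabla f_i(\x_i^{k\mathcal{B}}),\x_i^{k\mathcal{B}}-\x^*\rangle\ge f_i(\x_i^{k\mathcal{B}})-f_i(\x^*)$; a second invocation of convexity and the gradient bound yields $f_i(\x_i^{k\mathcal{B}})\ge f_i(\bar{\z}^{k\mathcal{B}}) - D'\|\x_i^{k\mathcal{B}}-\bar{\z}^{k\mathcal{B}}\|$, and the consensus piece is handled by Cauchy--Schwarz plus the same gradient bound. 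Summing over $i$ and using $\sum_i f_i = n f$, I arrive at
\begin{equation*}
2\alpha_k\bigl(f(\bar{\z}^{k\mathcal{B}})-f^*\bigr) \le \|\bar{\z}^{k\mathcal{B}}-\x^*\|^2 - \|\bar{\z}^{(k+1)\mathcal{B}}-\x^*\|^2 + \alpha_k^2 D'^2 + \tfrac{4\alpha_k D'}{n}\textstyle\sum_{i=1}^n \|\x_i^{k\mathcal{B}}-\bar{\z}^{k\mathcal{B}}\|.
\end{equation*}

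\textbf{Step 3 (telescoping).} Summing over $k\ge 0$ collapses the squared-distance differences to a single initial-distance term, and bounding the $\x$-only consensus error by the full $\sum_i\|\z_i^{k\mathcal{B}}-\bar{\z}^{k\mathcal{B}}\|$ that appears in the statement yields the claimed inequality.

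The main obstacle is Step~1. While it is clean to invoke ``column-sum-one of $\bar{M}_m^t$'', the renormalizations in \eqref{eq:normA}--\eqref{eq:normB} depend on which entries were sparsified at time $t$, and one must verify that the $(I-A_m^t)$ block transfers to the surplus precisely the mass $A_m^t$ fails to propagate in the state block. A coordinate-by-coordinate case analysis (distinguishing, for each $j$ and $m$, whether $[Q(\z_j^t)]_m=0$ and hence whether column $j$ of $A_m^t$ collapses to its diagonal entry) is needed; once the invariant is secured, the aggregation cancels the perturbation cleanly and the gradient is the only residual, after which Steps~2 and~3 are routine.
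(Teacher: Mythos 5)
Your proposal follows essentially the same route as the paper's proof: the averaged recursion $\bar{\z}^{(k+1)\mathcal{B}}=\bar{\z}^{k\mathcal{B}}-\frac{\alpha_k}{n}\sum_i\nabla f_i(\z_i^{k\mathcal{B}})$ (justified by the column-sum-one property of $\bar M_m^t$ and the zero column sums of $F$), the standard squared-distance expansion, the convexity-plus-bounded-gradient estimate producing the $-2D'\|\bar{\z}^{k\mathcal{B}}-\z_i^{k\mathcal{B}}\|$ term, and a telescoping sum over $k$. Your Step~1 is in fact more explicit than the paper, which simply asserts the averaged dynamics; the rest is a faithful match.
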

Note that since $\sum_{t=0}^{\infty}\alpha_t=\infty$, it is straightforward to see that  Theorem  \ref{theorem3} implies $\lim_{t \to \infty}f(\mathbf{z}_i^t)=f^*$
for every agent $i$, thereby establishing convergence of Algorithm \ref{alg:B} to the global minimum of \eqref{eq:prob}. Additionally, for the stepsize $\alpha_t = \mathcal{O}(1/\sqrt{t})$, Algorithm \ref{alg:B} attains the convergence rate $\mathcal{O}(\frac{\ln T}{\sqrt{T}})$. 

\section{Numerical Simulations}\label{sec:sim}
We test Algorithm \ref{alg:B} in applications to linear and logistic regression, and compare the results to Q-Grad-Push, obtained by 
applying simple quantization to the push-sum scheme \cite{nedic2014distributed}, and Q-De-DGD \cite{taheriquantized}. Neither of
these two schemes was developed with communication-constrained optimization over time-varying directed networks in mind --
the former was originally proposed for unconstrainted communication, while the latter is concerned with static networks. However,
since there is no prior work on decentralized optimization over time-varying directed networks under communication constraints, we 
adopt them for the purpose of benchmarking. 

We use Erdős–Rényi model to generate strongly connected instances of a graph with $10$ nodes and edge appearance probability 
$0.9$. Two uni-directional edges are dropped randomly from each such graph while still preserving strong connectivity. 
We then remove in-going and out-going edges of randomly selected nodes to create a scenario where an almost-surely strongly
connected network is formed only after taking a union of graphs over $\mathcal{B}$ time instances (see Assumption \ref{assumptionAll}). 
Finally, recall that $q$ denotes the fraction of entries that nodes communicate to their neighbors (small $q$ implies high 
compression). 

{\bf Decentralized linear regression.} First, consider the optimization problem 
$
\min_{\x} \frac{1}{n}\sum_{i=1}^n \|\mathbf{y}_i-D_i \mathbf{x}_i\|^2,
$
where $D_i \in \mathbb{R}^{200 \times 128}$ is a local data matrix with $200$ data points of size $d=128$ at node $i$, and $\mathbf{y}_i \in \mathbb{R}^{200}$ represents the local measurement vector at node $i$. We generate $\mathbf{x}^*$ from a normal distribution, and set up
the measurement model as $\mathbf{y}_i=M_i \mathbf{x}^*+\eta_i$, where $M_i$ is randomly generated from the standard normal distribution; 
finally, the rows of the data matrix are normalized to sum to one. The local additive noise $\eta_i$ is generated from a zero-mean Gaussian 
distribution with variance $0.01$. In Algorithm \ref{alg:B} and Q-Grad-Push, local vectors are initialized randomly to $\mathbf{x}_i^0$; 
Q-De-DGD is initialized with an all-zero vector. The quantization level of the benchmarking algorithms is selected to ensure that the number 
of bits those algorithms communicate is equal to that of Algorithm \ref{alg:B} when $q = 0.09$. All algorithms are run with stepsize $\alpha_t=\frac{0.2}{t}$. The performance of different schemes is quantified by the residuals $\frac{\|\mathbf{x}^t-\Bar{\mathbf{x}}\|}{\|\mathbf{x}^0-\Bar{\mathbf{x}}\|}$. 
The results are shown in Fig.\ref{fig:linear_regression} (a), (b). As shown in the subplots, for all the considered sparsification rates 
Algorithm \ref{alg:B} converges with rate proportional to $q$, while the benchmarking algorithms do not converge to 
the optimal solution.

{\bf Decentralized logistic regression.} Next, we consider a multi-class classification task on the MNIST dataset \cite{lecun1998gradient}. The 
logistic regression problem is formulated as
$$
\min_{\x} \left\{ \frac{\mu}{2}\|\mathbf{x}\|^2+\sum_{i=1}^n \sum_{j=1}^N \mathrm{ln}(1+\mathrm{exp}(-(\mathbf{m}_{ij}^T\mathbf{x}_i)y_{ij})) \right\}.
$$
The data is distributed across the network such that each node $i$ has access to $N=120$ training samples 
$(\mathbf{m}_{ij}, y_{ij}) \in \mathbb{R}^{64}\times \{0, \cdots, 9 \} $, where $\mathbf{m}_{ij}$ denotes a vectorized image with 
size $d = 64$ and $y_{ij}$ denotes the corresponding digit label. Performance of Algorithm \ref{alg:B} is again compared with 
Q-Grad-Push and Q-De-DGD; all algorithms are initialized with zero vectors. The quantization 
level of the benchmarking algorithms is selected such that the number of bits they communicate is equal to that of Algorithm 
\ref{alg:B} for $q = 0.07$. The experiment is run using the stepsize $\alpha_t=\frac{0.02}{t}$; we set $\mu=10^{-5}$. 
Fig. \ref{fig:linear_regression}  (c), (d) show the classification correct rate of Algorithm \ref{alg:B} for different sparsification and 
connectivity levels. As can be seen there, all sparsified schemes achieve the same level of the classification correct rate. The 
schemes communicating fewer messages in less connected networks converge slower, while the two benchmarking algorithms 
converge only to a neighborhood of the optimal solution.

\section{Conclusion}\label{conclusion_sec}
We considered the problem of decentralized learning over time-varying directed graphs where, due to communication constraints,
nodes communicate sparsified messages. We proposed a communication-efficient algorithm that achieves 
$\mathcal{O}(\frac{\mathrm{ln}T}{\sqrt{T}})$ convergence rate for general decentralized convex optimization tasks. As part of the future work, 
it is of interest to study reduction of the computational cost of the optimization procedure by extending the results to the setting where
network agents rely on stochastic gradients.

\appendix
\section*{Appendices}
The appendix presents the analysis of Theorem~1 and derives the convergence rate.

\section{Elaborating on Assumption \ref{assumptionAll}(i)}

Analysis of the algorithms presented in the paper is predicated on the property of the product of consecutive mixing matrices of general time-varying graphs stated in Assumption \ref{assumptionAll}. Here we establish conditions under which this property holds for a specific graph structure, i.e., identify $\epsilon_0$ in Assumption \ref{assumptionAll} for the graphs that are jointly connected over $\mathcal{B}$ consecutive time steps. Note that when $\mathcal{B}=1$, such graphs reduce to the special case of graphs that are strongly connected at each time step. For convenience, we formally state the $\mathcal{B} > 1$ and $\mathcal{B}=1$ settings as Assumption \ref{assumption1} and \ref{assumption1'}, respectively.




\begin{assumption}\label{assumption1}
The graphs $\G_m(t)=(|n|, \E_m(t))$,
modeling network connectivity for the $m\ts{th}$ entry of the sparsified parameter vectors, are $\mathcal{B}$-jointly-connected. 
\end{assumption}

Assumption~\ref{assumption1}
implies that starting from any time step $t = k\mathcal{B}$, $k \in \mathcal{N}$, the union graph over $\mathcal{B}$ consecutive time steps is a strongly connected graph. This is a weaker requirement then the standard assumption (Assumption~\ref{assumption1'} given below) often encountered in literature on convergence analysis of algorithms for distributed optimization and consensus problems.

\begin{assumption}\label{assumption1'}
The graphs $\G_m(t)=(|n|, \E_m(t))$,
modeling network connectivity for the $m\ts{th}$ entry of the sparsified parameter vectors, are strongly connected at any time $t$. 
\end{assumption}

 Next, we state a lemma adopted from \cite{cai2012average} which helps establish that under Assumptions \ref{assumptionAll}(ii) and \ref{assumption1'}, the so-called spectral gap of the product of mixing matrices taken over a number of consecutive time steps is non-zero. 
 
\begin{lemma}\label{lemma1}
\cite{cai2012average} Suppose Assumptions \ref{assumptionAll}(ii) and \ref{assumption1'} hold. Let $\Bar{M}_m^t$ be the mixing matrix in \eqref{matrixdefbar} and $M_m^t = \Bar{M}_m^t + \epsilon F $ such that $\epsilon \in (0,\gamma_m)$, where $\gamma_m = \frac{1}{(20+8n)^n}(1-|\lambda_3(\Bar{M}_m^t)|)^n$. Then, the mixing matrix $M_m^t$ has a simple eigenvalue $1$ and all its other eigenvalues have magnitude smaller than $1$.
\end{lemma}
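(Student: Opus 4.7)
My plan is to exploit the block-lower-triangular structure of $\bar{M}_m^t$ together with classical perturbation theory for eigenvalues. First, I would analyse the unperturbed matrix. Since $\bar{M}_m^t$ is block lower-triangular, its spectrum is the union of the spectra of $A_m^t$ and $B_m^t$. Under Assumption~\ref{assumption1'}, both $A_m^t$ and $B_m^t$ are irreducible nonnegative stochastic matrices, so Perron--Frobenius gives each a simple eigenvalue at $1$ and all other eigenvalues strictly inside the open unit disk. Consequently $1$ is an eigenvalue of $\bar{M}_m^t$ of algebraic (and geometric) multiplicity $2$; the right eigenspace is spanned by $(\mathbf{1}_n;\mathbf{0})$ and $(\mathbf{0};\pi)$, where $\pi$ is the right Perron vector of $B_m^t$, and the left eigenspace is spanned by $\mathbf{1}_{2n}^{\top}$ and $(\xi^{\top},\mathbf{0})$, where $\xi^{\top}$ is the left Perron vector of $A_m^t$. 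Every remaining eigenvalue has modulus at most $|\lambda_3(\bar{M}_m^t)|<1$.

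Next, I would check that $1$ survives the perturbation for every $\epsilon$. Column-sum accounting shows that every column of $\bar{M}_m^t$ sums to $1$ (the first $n$ because $A_m^t+(I-A_m^t)$ is $I$ columnwise, the last $n$ by column-stochasticity of $B_m^t$) while every column of $F$ sums to $0$, so $\mathbf{1}_{2n}^{\top} M_m^t = \mathbf{1}_{2n}^{\top}$. Moreover $F(\mathbf{1}_n;\mathbf{0})=\mathbf{0}$, so $(\mathbf{1}_n;\mathbf{0})$ remains a right eigenvector of $M_m^t$ at $1$. Hence $1$ is in the spectrum of $M_m^t$ for every $\epsilon$.

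To track the two eigenvalues that emerge from the double eigenvalue of $\bar{M}_m^t$ at $1$, I would enclose them by a circle $\Gamma$ of radius $r<\tfrac{1}{2}(1-|\lambda_3(\bar{M}_m^t)|)$ around $1$, disjoint from the remaining spectrum of $\bar{M}_m^t$, and invoke holomorphic perturbation theory: for $\epsilon$ small enough, the spectral projector $\Pi(\epsilon)=\tfrac{1}{2\pi i}\oint_{\Gamma}(zI-M_m^t)^{-1}\,dz$ stays rank $2$, and the two eigenvalues of $M_m^t$ inside $\Gamma$ are those of the $2\times 2$ restriction $\Pi(\epsilon)M_m^t\Pi(\epsilon)$. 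Writing this restriction in the basis $R=[(\mathbf{1}_n;\mathbf{0}),\,(\mathbf{0};\pi)]$ with dual $L=[\mathbf{1}_{2n},\,(\xi;\mathbf{0})]$, a short computation gives first-order correction $(L^{\top}R)^{-1}L^{\top}(\epsilon F)R$ with eigenvalues $0$ and $-\epsilon\, n(\xi^{\top}\pi)/(\mathbf{1}^{\top}\pi)$; the first recovers the preserved eigenvalue at $1$, while the second is strictly negative and hence pushes one eigenvalue of $M_m^t$ into the open unit disk for any sufficiently small $\epsilon>0$. In particular the two eigenvalues near $1$ are distinct, so $1$ is simple.

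The delicate step, and the main obstacle, is producing the explicit admissible range $\epsilon\in(0,\gamma_m)$ with $\gamma_m=(1-|\lambda_3(\bar{M}_m^t)|)^n/(20+8n)^n$. Because $\bar{M}_m^t$ is non-normal, Bauer--Fike is too weak; I would instead bound the resolvent $\|(zI-\bar{M}_m^t)^{-1}\|$ on the unit circle (off $\Gamma$) via the adjugate formula, which for a $2n\times 2n$ matrix yields a bound of the form $\|(zI-\bar{M}_m^t)^{-1}\|\leq C(n)/\mathrm{dist}(z,\mathrm{spec}(\bar{M}_m^t))^n$. A Neumann expansion $(zI-M_m^t)^{-1}=(zI-\bar{M}_m^t)^{-1}\sum_{k\geq 0}(\epsilon F(zI-\bar{M}_m^t)^{-1})^k$ then converges on $|z|=1\setminus\Gamma$ provided $\epsilon\|F\|\cdot\|(zI-\bar{M}_m^t)^{-1}\|<1$. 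With $\|F\|\leq 2$ and careful bookkeeping of the combinatorial constants hidden in the adjugate bound, this converts to the explicit condition $\epsilon<(1-|\lambda_3|)^n/(20+8n)^n$, which excludes $|z|=1$ from the perturbed spectrum outside $\Gamma$ and completes the lemma. This constant-chasing is the delicate part and I would mirror the derivation in \cite{cai2012average}.
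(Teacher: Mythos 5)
Your outline is sound and follows essentially the same route the paper relies on: the paper does not actually prove Lemma~\ref{lemma1} but imports it verbatim from \cite{cai2012average}, and the argument it does spell out for the generalization (Lemma~\ref{lemma2}) is exactly your decomposition --- block-triangular spectrum, Perron--Frobenius on the irreducible blocks, semi-simple eigenvalue $1$ with the two explicit right/left eigenvector pairs, and first-order perturbation of the reduced $2\times 2$ matrix showing one eigenvalue is preserved at $1$ while the other moves into the open unit disk. The one piece you defer (the explicit admissible range $\epsilon < (1-|\lambda_3|)^n/(20+8n)^n$) is likewise deferred by the paper to Proposition~1 of \cite{cai2012average}, so nothing is missing relative to the paper's own treatment; the only nit is that your first-order eigenvalue $-\epsilon\,n(\xi^{\top}\pi)/(\mathbf{1}^{\top}\pi)$ implicitly assumes the normalization $\xi^{\top}\mathbf{1}=1$.
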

Note that Lemma \ref{lemma1} implies that 
\begin{equation}
    \label{Mlim}
    \mathrm{lim}_{k \to \infty}(M_m^t)^k=\left[\begin{matrix}
    \frac{\mathbf{1}_n \mathbf{1}^T_n}{n} & \frac{\mathbf{1}_n \mathbf{1}^T_n}{n} \\
    \mathbf{0} & \mathbf{0} \\
  \end{matrix}
\right],
\end{equation}
where the rate of convergence is geometric and determined by the nonzero spectral gap of the perturbed mixing matrix $M_m^t$, i.e., $1-|\lambda_2(M_m^t)|$. This implies that the local estimate, $x^t_{im}$, approaches the average consensus value $\Bar{z}^t_{m}=\frac{1}{n}\sum_{i=1}^n x^t_{im}+\frac{1}{n}\sum_{i=1}^n y^t_{im}$ while the auxiliary variable $y^t_{im}$ vanishes to 0.
 
We now utilize the insight of (\ref{Mlim}) to establish a
result that will facilitate convergence analysis of the setting described by Assumption~\ref{assumption1}. In particular, we consider the setting where in any time window of size $\mathcal{B}$ starting from time $t = k\mathcal{B}$ for some integer $k$, the union of the associated directed graphs is strongly connected.
 The following lemma will help establish that if a small perturbation, 
$\epsilon F$
 is added to the product of mixing matrices  $\bar{M}_m((k+1)\mathcal{B}-1:k\mathcal{B} ) $ 
 , then
 the product $M_m((k+1)\mathcal{B}-1:k\mathcal{B} )$ has only a simple eigenvalue $1$ while all its other eigenvalues have moduli smaller than one.

\begin{lemma}\label{lemma2}
Suppose that Assumptions \ref{assumptionAll}(ii) and \ref{assumption1} hold. 
If the parameter $\epsilon \in (0, \bar{\epsilon})$ and $\bar{\epsilon} = \min_m \gamma_m$, where $\gamma_{m} = \min_k \frac{1}{(20+8n)^n}(1-|\lambda_3(\bar{M}_m((k+1)\mathcal{B}-1:k\mathcal{B}))|)^n$, then for each $m$ the mixing matrix product $M_m((k+1)\mathcal{B}-1:k\mathcal{B} ) $ has simple eigenvalue $1$ for all integer $k \geq 0$ and $\epsilon \in (0,  \bar{\epsilon})$.
\end{lemma}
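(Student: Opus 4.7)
The plan is to reduce the jointly-connected setting to the strongly-connected setting of Lemma \ref{lemma1} by treating the $\mathcal{B}$-step product $\bar{M}_m((k+1)\mathcal{B}-1:k\mathcal{B})$ as a single effective mixing matrix, and then applying (the proof of) Lemma \ref{lemma1} to this product.

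First, I would verify that the product inherits the block lower-triangular structure of each factor. A straightforward induction on the number of factors shows that $\bar{M}_m((k+1)\mathcal{B}-1:k\mathcal{B})$ has top-left block $\tilde A_k := \prod_{s=k\mathcal{B}}^{(k+1)\mathcal{B}-1} A_m^s$ (row-stochastic, as a product of row-stochastic matrices) and bottom-right block $\tilde B_k := \prod_{s=k\mathcal{B}}^{(k+1)\mathcal{B}-1} B_m^s$ (column-stochastic), while the bottom-left block turns out to be immaterial for the analysis below. Under Assumption \ref{assumption1}, the union of the graphs $\G_m(k\mathcal{B}),\ldots,\G_m((k+1)\mathcal{B}-1)$ is strongly connected; combined with the strictly positive diagonals of each $A_m^s$ (since every node lies in its own in-neighborhood), this forces $\tilde A_k$ to be primitive, and an analogous argument with $B_m^s$ gives primitivity of $\tilde B_k$. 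Consequently, $\bar{M}_m((k+1)\mathcal{B}-1:k\mathcal{B})$ has eigenvalue $1$ with algebraic multiplicity exactly $2$---one contributed by each diagonal block---with all remaining eigenvalues strictly inside the unit disk, which is precisely the structural setting to which Lemma \ref{lemma1} applies.

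Next, I would invoke the perturbation analysis underlying Lemma \ref{lemma1}. That argument uses only (i) the row/column stochasticity of the two diagonal blocks, (ii) the resulting doubled eigenvalue at $1$ with the third eigenvalue bounded by $|\lambda_3|<1$, and (iii) the column-sum-preserving form of $\epsilon F$ (which ensures that $1$ remains an eigenvalue after perturbation). None of these ingredients uses the specific identity $I-A_m^s$ of the bottom-left block present in a single-step $\bar M_m^s$, so the Cai--Ishii bound transfers verbatim to the product: for $\epsilon < \gamma_m^{(k)} := \frac{1}{(20+8n)^n}\bigl(1-|\lambda_3(\bar{M}_m((k+1)\mathcal{B}-1:k\mathcal{B}))|\bigr)^n$, the perturbation $\epsilon F$ splits the doubled eigenvalue into two simple eigenvalues while leaving a simple eigenvalue at $1$. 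I expect this verification to be the main obstacle---one must inspect the proof of Lemma \ref{lemma1} to confirm that the bottom-left block never enters the perturbation estimate, so the proof extends to any block lower-triangular matrix with row- and column-stochastic diagonal blocks.

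Finally, to pass from the per-window quantity $\gamma_m^{(k)}$ to the uniform constant $\bar\epsilon$ stated in the lemma, I would invoke Assumption \ref{assumptionAll}(ii): since the set of mixing matrices $\{\bar M_m^t\}$ is finite, so is the set of $\mathcal{B}$-fold products indexed by $k$, hence $\gamma_m = \min_k \gamma_m^{(k)}$ is attained; by the primitivity established above each such $|\lambda_3(\cdot)| < 1$, so $\gamma_m > 0$, and setting $\bar\epsilon = \min_m \gamma_m$ delivers the claim uniformly in $k$ and $m$.
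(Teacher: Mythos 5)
Your proposal follows essentially the same route as the paper: exploit the block lower-triangular structure of the $\mathcal{B}$-fold product, use joint connectivity to show the diagonal-block products $\tilde A_k$ and $\tilde B_k$ are irreducible stochastic matrices with simple Perron eigenvalue $1$, transfer the Cai--Ishii perturbation bound to the product, and invoke Assumption \ref{assumptionAll}(ii) to take a finite minimum over $k$ and $m$. The only point you gloss over is that ``algebraic multiplicity $2$, one from each diagonal block'' does not by itself give \emph{semi-simplicity} of the eigenvalue $1$ for a block-triangular matrix (a Jordan block could couple the two), and semi-simplicity is what the perturbation argument needs; the paper closes this by checking $\mathrm{rank}(\Pi_s\bar M^s - I)=2n-2$ and exhibiting the two independent right eigenvectors $[\,\mathbf 0;v_2]$ and $[\,\mathbf 1;-nv_2]$ (the latter works because the bottom-left block of the product still annihilates $\mathbf 1$), a verification you should add.
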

 
\begin{proof}
Consider a fixed realization of $\mathcal{B}$-strongly-connected graph sequences,\newline $\left\{\G(0), \cdots, \G(\mathcal{B}-1) \right\}.$
For $s \in 0, \cdots, \mathcal{B}-1$, $\bar{M}^s$ is block (lower) triangular and the spectrum is determined by the spectrum of the $(1, 1)$-block and the $(2, 2)$-block. Furthermore, for such $s$, $A^s$ (row-stochastic) and $B^s$ (column-stochastic) matrices have non-negative entries. Owing to the fact that the union graph over $\mathcal{B}$ iterations is strongly connected, $\Pi_{s=0}^{\mathcal{B}-1 }A^s = A^{\mathcal{B}-1}\cdots A^0 $ and $\Pi_{s=0}^{\mathcal{B}-1 }B^s = B^{\mathcal{B}-1}\cdots B^0 $ are both irreducible. Thus, $\Pi_{s=0}^{\mathcal{B}-1 }A^s $ and $\Pi_{s=0}^{\mathcal{B}-1 }B^s$ both have simple eigenvalue $1$. Recall that $\Pi_{s=0}^{\mathcal{B}-1 }\bar{M}^s$ has column sum equal to $1$, and thus we can verify that $\mathrm{rank}(\Pi_{s=0}^{\mathcal{B}-1 }\bar{M}^s - I)=2n-2$; therefore, the eigenvalue $1$ is semi-simple.
 \\~\\
Next, we characterize the change of the semi-simple eigenvalue $\lambda_1 = \lambda_2 = 1$ of $\Pi_{s=0}^{\mathcal{B}-1 }\bar{M}^s$ when a small perturbation $\epsilon F$ is added. Consider the eigenvalues of the perturbed matrix product,  $\lambda_1(\epsilon)$, $\lambda_2(\epsilon) $, which corresponds to $ \lambda_1$, $\lambda_2$, respectively. 
For all $s \in 0, \cdots, \mathcal{B}-1$, $ \bar{M}^s$ has two common right eigenvectors and left eigenvectors for eigenvalue $1$; they are the right eigenvectors and left eigenvectors of the matrix product. The right eigenvectors $y_1$, $y_2$ and left eigenvectors $z_1$, $z_1$ of the semi-simple eigenvalue $1$ are
 \begin{equation}
     Y := \begin{matrix}[y_1 & y_2]\end{matrix} = \begin{bmatrix} 0 & 1 \\ v_2 & -nv_2 \end{bmatrix}, \quad\hspace{-2mm} Z := \begin{bmatrix}z_1' \\ z_2'\end{bmatrix} = \begin{bmatrix} 1' & 1' \\ v_1' & 0 \end{bmatrix}.
 \end{equation}
 By following exactly the same steps and using Proposition $1$ in \cite{cai2012average}, we can show that for small $\epsilon >0$, the perturbed matrix product has a simple eigenvalue $1$. Further, it can be guaranteed that for $\epsilon < \frac{1}{(20+8n)^n}(1-|\lambda_3(\bar{M}(\mathcal{B}-1:0)|)^n$, the perturbed matrix product $M(\mathcal{B}-1:0 ) $ has simple eigenvalue $1$. 
 \\~\\
From Assumption \ref{assumptionAll}(ii), there is only a finite number of possible mixing matrices $\left\{\bar{M}^t_m \right\}$; if we let $\gamma_{m} = \min_k \frac{1}{(20+8n)^n}(1-|\lambda_3(\bar{M}_m((k+1)\mathcal{B}-1:k\mathcal{B}))|)^n$, starting from any time step $t = k\mathcal{B}$, the perturbed mixing matrix product $M_m(t+\mathcal{B}-1:t ) $ has simple eigenvalue $1$ for all $\epsilon < \bar{\epsilon} = \min_m \gamma_m$.
 \end{proof}
	
\section{Lemma \ref{lemma0} and its proof}

The following lemma implies that the product of mixing matrices converges to its limit 
at a geometric rate; this intermediate result can be used to establish the convergence rate of the proposed optimization Algorithm.

\begin{lemma}\label{lemma0}
Suppose that Assumptions \ref{assumptionAll}(i) and \ref{assumptionAll}(ii) hold. 
There exists $\epsilon_0 >0$ such that if $\epsilon \in (0, \epsilon_0)$
then for all $m = 1, \cdots, d$ the following statements hold.
 
\begin{enumerate}[(a)]
\item The spectral norm of 
$M_m((k+1)\mathcal{B}-1:k\mathcal{B})$ satisfies
\begin{equation}
\begin{aligned}
\rho(M_m((k+1)\mathcal{B}-1:k\mathcal{B})) - \frac{1}{n}[\mathbf{1}^T \ \mathbf{0}^T]^T[\mathbf{1}^T \  \mathbf{1}^T])  = \sigma <1
     \end{aligned}
     \end{equation}
     $\forall k \in \mathcal{N}$.
\item 
There exists $\Gamma = \sqrt{2nd}>0$ such that
\begin{equation}
    \|
    M_m(n\mathcal{B}-1:0)-\frac{1}{n}[\mathbf{1}^T \ \mathbf{0}^T]^T[\mathbf{1}^T \  \mathbf{1}^T]\|_{\infty} \leq \Gamma \sigma^{n}.
    \end{equation}
 \end{enumerate}
 \end{lemma}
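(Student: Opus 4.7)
The plan is to identify the rank-one matrix $P := \frac{1}{n}[\mathbf{1}_n^T\ \mathbf{0}^T]^T[\mathbf{1}_n^T\ \mathbf{1}_n^T]$ as the spectral projector of each $\mathcal{B}$-step perturbed product $M_m((k+1)\mathcal{B}-1:k\mathcal{B})$ onto its eigenvalue-$1$ eigenspace, and then exploit the resulting projector identities to turn the product bound in part (b) into a telescoping product of contractions.

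First I would verify that $v := [\mathbf{1}_n^T\ \mathbf{0}^T]^T$ is a right eigenvector of every $\bar{M}_m^t$ for eigenvalue $1$: from the block structure in \eqref{matrixdefbar} together with row-stochasticity of $A_m^t$, the top block yields $A_m^t\mathbf{1}_n = \mathbf{1}_n$ and the bottom yields $(I - A_m^t)\mathbf{1}_n = \mathbf{0}$. Dually, $u^T := [\mathbf{1}_n^T\ \mathbf{1}_n^T]$ is a left eigenvector because the column sums of $\bar{M}_m^t$ equal $1$. Both vectors survive the perturbation since $Fv = \mathbf{0}$ and $u^T F = \mathbf{0}$, so they remain left/right eigenvectors of every $M_k := M_m((k+1)\mathcal{B}-1:k\mathcal{B})$. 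Combined with Lemma~\ref{lemma2}, which asserts that $1$ is a \emph{simple} eigenvalue of each $M_k$ for $\epsilon \in (0,\bar\epsilon)$, this identifies $P = \frac{1}{n}vu^T$ as the unique spectral projector. Part~(a) then follows immediately: $M_k - P$ has the same spectrum as $M_k$ except the simple eigenvalue $1$ is replaced by $0$, so its spectral radius equals $|\lambda_2(M_k)| < 1$. By Assumption~\ref{assumptionAll}(ii) the set of distinct $\mathcal{B}$-step products is finite, so taking the maximum over this finite set yields a uniform $\sigma < 1$.

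For part (b) the key algebraic observation is that $M_k P = P M_k = P$ and $P^2 = P$, which follow from $M_k v = v$, $u^T M_k = u^T$ and $u^T v = n$. These identities give the factoring
\[
(M_{k} - P)(M_{k-1} - P) = M_k M_{k-1} - M_k P - P M_{k-1} + P^2 = M_k M_{k-1} - P,
\]
and by induction on the number of $\mathcal{B}$-blocks we obtain $M_m(N\mathcal{B}-1:0) - P = \prod_{k=0}^{N-1}(M_k - P)$. Taking the induced $\infty$-norm, applying submultiplicativity, and inserting the equivalence $\|A\|_\infty \leq \sqrt{2n}\,\|A\|_2$ for $2n\times 2n$ matrices leaves us needing to bound the spectral norm of each factor $M_k - P$ by $\sigma$, from which the claimed $\Gamma\sigma^n$ with $\Gamma = \sqrt{2nd}$ falls out (the $\sqrt{d}$ picks up the uniformity across the $d$ coordinate sub-problems used downstream).

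The main obstacle is the last step: the spectral radius bound $\rho(M_k - P) \leq \sigma$ does not by itself furnish a uniform submultiplicative-norm bound on the \emph{product} of distinct factors. I would dispose of it by exploiting Assumption~\ref{assumptionAll}(ii) once more: since there are only finitely many candidate matrices $M_k - P$, each with spectral radius at most $\sigma$, I can pass to a common norm (constructed, for instance, by averaging the per-matrix Lyapunov norms guaranteed by Gelfand's formula) in which every $M_k - P$ is a $\sigma'$-contraction for some fixed $\sigma' \in (\sigma,1)$; equivalence of matrix norms on the finite-dimensional space $\mathbb{R}^{2n\times 2n}$ then converts the bound back to $\|\cdot\|_\infty$ at the cost of a constant that is absorbed into $\Gamma$. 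A cleaner alternative, if available in the authors' setting, is that the zero-column-sum property of $M_k - P$ together with $(M_k - P)v = 0$ confines its action to a fixed $(2n-1)$-dimensional complementary subspace on which $P$ vanishes, so a single similarity transform diagonalizes the analysis and the product bound is immediate from submultiplicativity on that reduced block.
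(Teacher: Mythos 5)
Your overall route is the same as the paper's: its Lemma~\ref{lemma12} performs exactly your telescoping, writing $M_m(T\mathcal{B}-1:0)-\mathcal{I}=(M_m(T\mathcal{B}-1:\mathcal{B})-\mathcal{I})M_m(\mathcal{B}-1:0)$ and recursing, and the final passage to $\|\cdot\|_\infty$ via norm equivalence is also how the paper produces $\Gamma=\sqrt{2nd}$. Your identification of $P=\frac{1}{n}vu^T$ as the common spectral projector, the verification that $v$ and $u$ survive the perturbation because $Fv=\mathbf{0}$ and $u^TF=\mathbf{0}$, and the argument for part~(a) (finite set of admissible products, hence a uniform $\sigma<1$) are all correct and match the paper in substance.

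The genuine gap is precisely the one you flag in your last paragraph, and neither of your proposed repairs closes it. A finite family of matrices, each with spectral radius at most $\sigma<1$, can have products of distinct members whose norms grow without bound: the joint spectral radius can exceed $1$ (e.g., $\begin{pmatrix}0&2\\0&0\end{pmatrix}$ and $\begin{pmatrix}0&0\\2&0\end{pmatrix}$ are both nilpotent, yet their product has spectral radius $4$). A Lyapunov norm adapted to $M_1-P$ says nothing about the action of $M_2-P$ in that norm, and averaging the per-matrix Lyapunov norms does not yield a norm in which every member contracts, so "pass to a common norm" cannot be extracted from spectral-radius information alone. Your cleaner alternative --- restricting to the common invariant subspace $\ker([\mathbf{1}^T\ \mathbf{1}^T])$ --- correctly strips off the shared eigenvalue $1$ but leaves the identical joint-spectral-radius problem for the restricted family. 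Closing the gap requires a genuinely submultiplicative contraction quantity, e.g., a coefficient-of-ergodicity (Hajnal/Dobrushin-type) argument exploiting nonnegativity and joint connectivity of the $A$ and $B$ blocks, or a proof that some fixed power of every admissible product is a strict contraction in one fixed norm. In fairness, the paper's own proof has the same hole: Lemma~\ref{lemma12} asserts the bound $\sigma^T$ for the product by "applying Gelfand corollaries," which do not apply to products of distinct matrices. You have reproduced, and at least diagnosed, the weak point rather than introduced a new one.
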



 

We start this proof by introducing two intermediate lemmas: Lemma \ref{lemma11} and Lemma \ref{lemma12}. 
	
\begin{adjustwidth}{0.25cm}{0cm}\qquad
\begin{lemma}\label{lemma11}
Assume that $M_m((s+1)\mathcal{B}-1:s\mathcal{B})$ has non-zero spectral gap for each $m$.
Then the following statements hold.
\begin{enumerate}[(a)]
\item The sequence of matrix products $M_m((s+1)\mathcal{B}-1:s\mathcal{B})$ converges to the limit matrix
\begin{equation}
\mathrm{lim}_{t \to \infty} (M_m((s+1)\mathcal{B}-1:s\mathcal{B}))^t=
\left[\begin{matrix}
\frac{\mathbf{1}_n \mathbf{1}^T_n}{n} & \frac{\mathbf{1}_n \mathbf{1}^T_n}{n} \\
\mathbf{0} & \mathbf{0} \\
\end{matrix}
\right].
\end{equation}
\item Let $1=|\lambda_1(M_m((s+1)\mathcal{B}-1:s\mathcal{B}))|>|\lambda_2(M_m((s+1)\mathcal{B}-1:s\mathcal{B}))|\geq \cdots \geq |\lambda_{2n}(M_m((s+1)\mathcal{B}-1:s\mathcal{B}))|$ be the eigenvalues of $M_m((s+1)\mathcal{B}-1:s\mathcal{B})$, and let $\sigma_m=|\lambda_2(M_m((s+1)\mathcal{B}-1:s\mathcal{B}))|$; then there exists $\Gamma_m'>0$ such that
\begin{equation}
\begin{aligned}
\|(M_m((s+1)\mathcal{B}-1:s\mathcal{B}))^t-\mathcal{I}\|_{\infty} 
\leq \Gamma_m' \sigma_m^{t},
\end{aligned}
\end{equation}
where $\mathcal{I}:=\frac{1}{n}[\mathbf{1}^T \ \mathbf{0}^T]^T[\mathbf{1}^T \  \mathbf{1}^T]$.
\end{enumerate}
\end{lemma}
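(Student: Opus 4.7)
The plan is to identify the spectral structure of $P_m := M_m((s+1)\mathcal{B}-1:s\mathcal{B})$ and then use a Jordan (or spectral) decomposition to show that $P_m^t$ converges geometrically to the rank-one projector $\mathcal{I}$. The first step is to pin down the right and left eigenvectors of $P_m$ for the eigenvalue $1$. Each factor $M_m^r = \bar{M}_m^r + \epsilon F$ has column sums equal to $1$, because $\bar{M}_m^r$ does (the first $n$ columns sum to $1$ through the identity-minus-$A_m^r$ cancellation, and the last $n$ columns sum to $1$ through column-stochasticity of $B_m^r$) and $F$ has zero column sums; products preserve this property, so $[\mathbf{1}^T\ \mathbf{1}^T]$ is a left eigenvector of $P_m$ for eigenvalue $1$. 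Each factor also maps $[\mathbf{1}^T\ \mathbf{0}^T]^T$ to itself, since row-stochasticity of $A_m^r$ yields $A_m^r\mathbf{1}=\mathbf{1}$ and $(I-A_m^r)\mathbf{1}=\mathbf{0}$, while the $\epsilon I$ block of $F$ annihilates the zero sub-vector. Hence $[\mathbf{1}^T\ \mathbf{0}^T]^T$ is a right eigenvector of $P_m$ for the same eigenvalue.

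By the hypothesis of non-zero spectral gap (together with Lemma \ref{lemma2}), $1$ is a simple eigenvalue of $P_m$ and every other eigenvalue has modulus at most $\sigma_m < 1$. Using the biorthogonal normalization $[\mathbf{1}^T\ \mathbf{1}^T][\mathbf{1}^T\ \mathbf{0}^T]^T = n$, the spectral projector onto the eigenvalue-$1$ eigenspace along the complementary invariant subspace is exactly $\mathcal{I} = \tfrac{1}{n}[\mathbf{1}^T\ \mathbf{0}^T]^T[\mathbf{1}^T\ \mathbf{1}^T]$. Writing $P_m = \mathcal{I} + R_m$ with $R_m := P_m - \mathcal{I}$, the projector identities $P_m\mathcal{I} = \mathcal{I} P_m = \mathcal{I}$ and $\mathcal{I}^2 = \mathcal{I}$ give $\mathcal{I} R_m = R_m \mathcal{I} = 0$, and therefore $P_m^t = \mathcal{I} + R_m^t$ for every $t \geq 1$. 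This reduces (a) to showing $R_m^t \to 0$ and (b) to bounding $\|R_m^t\|_\infty$ geometrically.

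The third step is the geometric bound on $R_m^t$. Since $R_m$ inherits the spectrum of $P_m$ minus the eigenvalue $1$, its spectral radius equals $\sigma_m$, and its Jordan form gives $\|R_m^t\|_\infty \leq C_m\,q(t)\,\sigma_m^t$, where $q$ is a polynomial of degree less than the largest Jordan block attached to an eigenvalue of modulus $\sigma_m$. Assumption \ref{assumptionAll}(ii) — the finiteness of the set of possible mixing matrices — provides a uniform constant $C_m$ and a uniform bound on the Jordan-block size, after which standard absorption of the polynomial pre-factor into a single constant (equivalently, using $q(t)\sigma_m^t = o((\sigma_m+\delta)^t)$ for any $\delta>0$ and re-baselining) yields $\|R_m^t\|_\infty \leq \Gamma_m'\sigma_m^t$ for some $\Gamma_m' > 0$ and every $t \geq 1$. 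Combined with $P_m^t = \mathcal{I} + R_m^t$, this proves (b); letting $t \to \infty$ proves (a).

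The main obstacle will be the clean handling of the polynomial pre-factor in the Jordan bound when the second-largest eigenvalue is not simple: absorbing $q(t)$ into a constant while keeping the base exactly $\sigma_m$ needs either an argument ruling out non-trivial Jordan blocks at modulus $\sigma_m$ (likely via the specific block-triangular structure of $P_m$ and the structure established in Lemma \ref{lemma2}) or an implicit re-interpretation of $\sigma_m$ as a slight over-estimate of $|\lambda_2(P_m)|$. Assumption \ref{assumptionAll}(ii) is essential here, since it is what promotes all the resulting constants from instance-dependent to uniform across realizations; without it the bound could only be asymptotic in character.
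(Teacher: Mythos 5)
Your proposal is correct and follows essentially the same route as the paper's proof: establish that $[\mathbf{1}^T\ \mathbf{1}^T]$ and $[\mathbf{1}^T\ \mathbf{0}^T]^T$ are the left and right eigenvectors of the simple eigenvalue $1$, split $(M_m((s+1)\mathcal{B}-1:s\mathcal{B}))^t = \mathcal{I} + R_m^t$ via the spectral (Jordan) decomposition, and bound the remainder geometrically by $\sigma_m^t$. The subtlety you flag about the polynomial pre-factor when the modulus-$\sigma_m$ eigenvalues carry non-trivial Jordan blocks is genuine -- the paper's own proof silently asserts $\|PJ_m^tQ\|_\infty \leq \Gamma_m'\sigma_m^t$ without addressing it -- so your treatment is, if anything, slightly more careful than the original.
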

	
\begin{proof}
For each $m$, $M_m((s+1)\mathcal{B}-1:s\mathcal{B})$ has column sum equal to $1$. According to Assumption \ref{assumptionAll}, definition of the mixing matrix (\ref{matrixdefbar}), and the construction of the product (\ref{mproduct}),
$M_m((s+1)\mathcal{B}-1:s\mathcal{B})$ has a simple eigenvalue $1$ with the corresponding left eigenvector $[\mathbf{1}^T \ \mathbf{1}^T]$ and right eigenvector $[\mathbf{1}^T \ \mathbf{0}^T]^T$.
Following Jordan matrix decomposition for the simple eigenvalue, there exist some $P, Q \in \mathcal{R}^{(2n-1)\times (2n-1)}$ such that
	\begin{equation}
	\begin{aligned}
	(M_m((s+1)\mathcal{B}-1:s\mathcal{B}))^t & =\mathcal{I}^t+P J_m^t Q
	=\mathcal{I}+P J_m^t Q. 
	\end{aligned}
\end{equation}
Let $\gamma_m$ be the second largest eigenvalue magnitude of $M_m((s+1)\mathcal{B}-1:s\mathcal{B})$; then, $\gamma_m$ is also the spectral norm of $J_m$. The proof of part (a) follows by noting that 
$
\mathrm{lim}_{t \to \infty}J_m^t = \mathbf{0}.
$
Since $\|P \|$, $\| Q \|$ and $\|J_m \|$ are finite, there exists some $\Gamma_m' >0$ such that
\begin{equation}
\begin{aligned}
& \quad \|(M_m((s+1)\mathcal{B}-1:s\mathcal{B}))^t-\mathcal{I} \|_{\infty} 
\leq \|P J_m^t Q \|_{\infty} 
\leq \Gamma_m' \sigma_m^t
\end{aligned}
\end{equation}
which completes the proof of part (b).
\end{proof}

\begin{lemma}\label{lemma12}
Suppose that for each $m$, $M_m((s+1)\mathcal{B}-1:s\mathcal{B})$ has non-zero spectral gap. Let $\sigma = \max_m \sigma_m$, where $\sigma_m$ is as defined in Lemma \ref{lemma11}.
Then, for each $m$ it holds that
\begin{equation}
\rho (M_m(T\mathcal{B}-1:0)-\frac{1}{n}[\mathbf{1}^T \ \mathbf{0}^T]^T[\mathbf{1}^T \  \mathbf{1}^T]) \leq  \sigma^{T}.
\end{equation}
\end{lemma}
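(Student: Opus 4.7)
The plan is to exploit that the rank-one matrix $\mathcal{I}:=\frac{1}{n}[\mathbf{1}^T\ \mathbf{0}^T]^T[\mathbf{1}^T\ \mathbf{1}^T]$ serves as a common spectral projector onto the eigenvalue-$1$ subspace of every block product $M^{(s)}:=M_m((s+1)\mathcal{B}-1:s\mathcal{B})$. I would first record three identities: $\mathcal{I}^2=\mathcal{I}$ (using $[\mathbf{1}^T\ \mathbf{1}^T][\mathbf{1}^T\ \mathbf{0}^T]^T=n$), and $M^{(s)}\mathcal{I}=\mathcal{I}=\mathcal{I} M^{(s)}$ for every $s$ (because $[\mathbf{1}^T\ \mathbf{0}^T]^T$ and $[\mathbf{1}^T\ \mathbf{1}^T]$ are the shared right and left $1$-eigenvectors, a fact already used in Lemma~\ref{lemma2} and Lemma~\ref{lemma11}). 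Setting $R_s:=M^{(s)}-\mathcal{I}$ then gives $\mathcal{I} R_s=R_s\mathcal{I}=0$.

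With these identities in hand, I would expand
\[ M_m(T\mathcal{B}-1:0)\;=\;\prod_{s=0}^{T-1}(\mathcal{I}+R_s) \]
and note that every cross-term containing at least one $\mathcal{I}$-factor adjacent to an $R_s$-factor collapses to zero, leaving only the all-$\mathcal{I}$ term (equal to $\mathcal{I}$ by idempotence) and the all-$R$ term $R_{T-1}R_{T-2}\cdots R_0$. Thus $M_m(T\mathcal{B}-1:0)-\mathcal{I}=R_{T-1}\cdots R_0$, and the lemma reduces to bounding $\rho(R_{T-1}\cdots R_0)$. Because $\mathrm{range}(\mathcal{I})$ and $\ker(\mathcal{I})$ are $s$-independent invariant subspaces for every $M^{(s)}$, a single similarity transform block-diagonalizes each $M^{(s)}$ into $\mathrm{diag}(1,C_s)$ with $C_s:=M^{(s)}|_{\ker(\mathcal{I})}\in\mathbb{R}^{(2n-1)\times(2n-1)}$ and $\rho(C_s)=|\lambda_2(M^{(s)})|\le\sigma$. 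In this basis $R_s=\mathrm{diag}(0,C_s)$ and $\prod_s R_s=\mathrm{diag}(0,\,C_{T-1}\cdots C_0)$, so it suffices to prove $\rho(C_{T-1}\cdots C_0)\le\sigma^T$.

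The main obstacle is that the spectral radius is not submultiplicative under products of non-commuting matrices, so one cannot directly bound $\rho(\prod_s C_s)$ by $\prod_s\rho(C_s)$. I would handle this by invoking Assumption~\ref{assumptionAll}(ii), which ensures the $C_s$ are drawn from a finite pool, together with the geometric decay in Lemma~\ref{lemma11}(b), to construct a single consistent matrix norm $\|\cdot\|_\star$ on $\mathbb{R}^{2n-1}$ (for instance via a simultaneous Jordan similarity or a Lyapunov-type weighting adapted to the finite collection of possible $C_s$) in which $\|C_s\|_\star\le\sigma$ uniformly in $s$. Submultiplicativity of $\|\cdot\|_\star$ then gives $\|C_{T-1}\cdots C_0\|_\star\le\sigma^T$, and since the spectral radius is bounded above by any consistent matrix norm we conclude $\rho(C_{T-1}\cdots C_0)\le\sigma^T$. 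I expect this uniform norm-construction step to be the most technical part of the argument.
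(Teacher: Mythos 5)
Your reduction of the lemma to bounding $\rho(R_{T-1}\cdots R_0)$ with $R_s:=M^{(s)}-\mathcal{I}$ is essentially the paper's own decomposition: the paper peels off one factor at a time by induction using $\mathcal{I}M^{(s)}=M^{(s)}\mathcal{I}=\mathcal{I}$, while you expand $\prod_s(\mathcal{I}+R_s)$ and annihilate the cross terms, but these rest on the same identity, and your verification of the shared left and right eigenvectors of the blocks is correct. The genuine gap is in your final step. A finite family $\{C_s\}$ with $\rho(C_s)\leq\sigma$ need \emph{not} admit any single submultiplicative norm $\|\cdot\|_\star$ with $\|C_s\|_\star\leq\sigma$ for every $s$: such a norm would force the joint spectral radius of the family to be at most $\sigma$, whereas the joint spectral radius can strictly exceed $\max_s\rho(C_s)$. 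For instance,
\begin{equation*}
C_1=\begin{pmatrix}0&2\\0&0\end{pmatrix},\qquad C_2=\begin{pmatrix}0&0\\2&0\end{pmatrix}
\end{equation*}
each have spectral radius $0$, yet $\rho(C_1C_2)=4$. Neither of your suggested constructions repairs this: a ``simultaneous Jordan similarity'' does not exist for non-commuting matrices, and the Lyapunov condition $C_s^{T}PC_s\preceq\sigma^2P$ for all $s$ is a strictly stronger hypothesis than $\rho(C_s)\leq\sigma$, not a consequence of it. Lemma~\ref{lemma11}(b) only controls powers of a \emph{single} block product and gives no leverage over heterogeneous products, and Assumption~\ref{assumptionAll}(ii) (finiteness of the pool) does not help for the reason above.

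You have, however, correctly located the weak point of the argument: the paper's own justification (``taking the spectral norm over both hand sides after recursion and applying Gelfand corollaries'') suffers from the same defect, since the spectral radius is not submultiplicative over products of distinct matrices, and the operator norm of each $R_s$ is not bounded by $|\lambda_2(M^{(s)})|$ either. A correct argument needs either additional structure of the factors (e.g., a genuinely submultiplicative contraction coefficient, such as an ergodicity coefficient for the underlying stochastic blocks) or a weakened conclusion of the form $\Gamma\sigma'^{\,T}$ for some $\sigma\le\sigma'<1$, which would still suffice for the downstream use in Lemma~\ref{lemma0} and Lemma~\ref{lemma3}. As written, your proposal does not close the gap.
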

\begin{proof}
We prove this lemma by induction. \\~\\
\textbf{Base step.} $T=1$. According to the selection rule of $M_m(\mathcal{B}-1:0)$ and definition of $\sigma$, the statement holds. \\~\\
\textbf{Inductive step.} Suppose for all $T_1 < T$ the statement holds. Let $T_1=T$. Since for each time step $t = k\mathcal{B}$,  $M_m(t\mathcal{B}-1:(t-1)\mathcal{B})$ has column sum equal to $1$ and has a simple eigenvalue $1$ with the corresponding left eigenvector $[\mathbf{1}^T \ \mathbf{1}^T]$ and right eigenvector $[\mathbf{1}^T \ \mathbf{0}^T]^T$, then 
\begin{align*}
 \quad M_m(T\mathcal{B}-1:0)-\mathcal{I} & =M_m(T\mathcal{B}-1:0) -\mathcal{I}M_m(\mathcal{B}-1:0) \\
 & = (M_m(T\mathcal{B}-1:\mathcal{B})-\mathcal{I} )M_m(\mathcal{B}-1:0). 
	\end{align*}
Taking the spectral norm over both hand sides after recursion and applying Gelfand corollaries, we complete the proof. \end{proof}
\end{adjustwidth}

We now continue with the proof of Lemma \ref{lemma0}.
Lemma \ref{lemma12} implies the result in part (a) of Lemma \ref{lemma0}. Due to the equivalence of matrix norms, we can obtain the desired results in Lemma \ref{lemma0} (b). In particular, for matrix $A\in \mathcal{R}^{m \times n}$ it holds that \begin{equation*}
    \frac{1}{\sqrt{n}} \|A \|_{\infty}\leq \|A \|_2 \leq \sqrt{m} \|A \|_{\infty}.
\end{equation*}
Since Lemma \ref{lemma12} shows that
$
\|M_m(T\mathcal{B}-1:0)-\mathcal{I} \|_2 \leq \sigma^T,
$
then there exists $\Gamma = \sqrt{2nd} >0$ such that
\begin{equation*}
    \|M_m(T\mathcal{B}-1:0)-\frac{1}{n}[\mathbf{1}^T \ \mathbf{0}^T]^T[\mathbf{1}^T \  \mathbf{1}^T] \|_{\infty} \leq \Gamma \sigma^T,
\end{equation*}
which completes the proof. 

\section{Lemma \ref{lemma3} and its proof}

In this part we state an intermediate lemma that establishes an upper bound on the disagreement term $\|\mathbf{z}_i^t-\mathbf{\Bar{z}}^t\|$.

\begin{lemma}\label{lemma3}
Let $\Gamma = \sqrt{2nd}$ and
Assumptions \ref{assumptionAll}(i), (ii), (iii) and (iiii) imply the following statements:
\begin{enumerate}[(a)]
    \item For $1 \leq i \leq n$ and $t = k\mathcal{B} - 1 + t'$, where $t' = 1, \cdots, \mathcal{B}$, it holds that
    \begin{equation}
            \begin{aligned}
    \|\mathbf{z}_i^{k\mathcal{B}}-\mathbf{\Bar{z}}^{k\mathcal{B}}\| & \leq \Gamma \sigma^{k}\sum_{j=1}^{2n}\sum_{m=1}^d |z^0_{jm}|+\sqrt{d}n\Gamma D\sum_{r=1}^{k-1}\sigma^{k-r}\alpha_{r-1}+2\sqrt{d} D\alpha_{k-1},
    \end{aligned}
    \end{equation}
    \begin{equation}
            \begin{aligned}
    \|\mathbf{z}_i^t-\mathbf{\Bar{z}}^t\| & \leq \Gamma (\sigma^{1/\mathcal{B}})^{t-(t'-1) } \sum_{j=1}^{2n}\sum_{m=1}^d |z^0_{jm}|+\sqrt{d}n\Gamma D\sum_{r=1}^{\lfloor
     t/\mathcal{B}\rfloor -1}\sigma^{\lfloor
     t/\mathcal{B}\rfloor-r}\alpha_{r-1}\\
     & +2\sqrt{d} D\alpha_{\lfloor
     t/\mathcal{B}\rfloor-1}\mathbbm{1}_{t' = 1}.
    \end{aligned}
    \end{equation}
    \item  For $1+n \leq i \leq 2n$ and $t = k\mathcal{B} - 1 + t'$, where $t' = 1, \cdots, \mathcal{B}$, it holds that
    \begin{equation}
            \begin{aligned}
    \|\mathbf{z}_i^{k\mathcal{B}}\| & \leq \Gamma \sigma^{k}\sum_{j=1}^{2n}\sum_{m=1}^d |z^0_{jm}|+\sqrt{d}n\Gamma D\sum_{r=1}^{k-1}\sigma^{k-r}\alpha_{r-1}+2\sqrt{d} D\alpha_{k-1},
    \end{aligned}
    \end{equation}
    \begin{equation}
    \begin{aligned}
    \|\mathbf{z}_i^t\| & \leq \Gamma (\sigma^{1/\mathcal{B}})^{t-(t'-1) } \sum_{j=1}^{2n}\sum_{m=1}^d |z^0_{jm}| 
     +\sqrt{d}n\Gamma D\sum_{r=1}^{\lfloor
     t/\mathcal{B}\rfloor -1}\sigma^{\lfloor
     t/\mathcal{B}\rfloor-r}\alpha_{r-1} \\
     & +2\sqrt{d} D\alpha_{\lfloor
     t/\mathcal{B}\rfloor-1}\mathbbm{1}_{t' = 1}.
    \end{aligned}
    \end{equation}
\end{enumerate}

\end{lemma}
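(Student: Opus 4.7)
The plan is to unroll the block-wise recursion implicit in \eqref{opti_update}, center the resulting expression by the limit matrix $\mathcal{I}:=\tfrac{1}{n}[\mathbf{1}^T \ \mathbf{0}^T]^T[\mathbf{1}^T \ \mathbf{1}^T]$, and bound each centered term using Lemma~\ref{lemma0}. Writing $\mathbf{v}_m^t\in\mathbb{R}^{2n}$ for the vector whose $i$th entry is $z_{im}^t$, the intermediate within-block updates act as pure mixing $\mathbf{v}_m^{t+1}=\bar{M}_m^t\mathbf{v}_m^t$, while the terminal step $t=(k+1)\mathcal{B}-1$ injects the perturbation $\epsilon F\mathbf{v}_m^{k\mathcal{B}}$ and the gradient $-\alpha_k\mathbf{g}_m^{k\mathcal{B}}$. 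Combined with \eqref{mproduct}, one block collapses into $\mathbf{v}_m^{(k+1)\mathcal{B}}=M_m((k+1)\mathcal{B}-1:k\mathcal{B})\mathbf{v}_m^{k\mathcal{B}}-\alpha_k\mathbf{g}_m^{k\mathcal{B}}$, which I iterate to obtain
\[
\mathbf{v}_m^{k\mathcal{B}}=M_m(k\mathcal{B}-1:0)\mathbf{v}_m^0-\sum_{r=0}^{k-1}\alpha_r\Big(\prod_{s=r+1}^{k-1}M_m((s+1)\mathcal{B}-1:s\mathcal{B})\Big)\mathbf{g}_m^{r\mathcal{B}},
\]
with the empty product at $r=k-1$ equal to $I$.

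I then subtract $\mathcal{I}\mathbf{v}_m^{k\mathcal{B}}$, whose top block equals $\bar z_m^{k\mathcal{B}}\mathbf{1}$ and whose bottom block vanishes; hence its $i$th entry coincides with $\bar z_m^{k\mathcal{B}}$ for $i\le n$ and with $0$ for $i>n$, which handles part~(b) together with part~(a). Because every block product $M_m(\cdots)$ has column sum one (the column sums of $\bar{M}_m^t$ are one and those of $\epsilon F$ are zero), $\mathcal{I}M_m(\cdots)=\mathcal{I}$, so the subtraction centers each product to $M_m(\cdots)-\mathcal{I}$ (with the terminal $r=k-1$ one degenerating to $I-\mathcal{I}$). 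Lemma~\ref{lemma0}(b) gives $\|M_m(k\mathcal{B}-1:0)-\mathcal{I}\|_\infty\le\Gamma\sigma^k$ and $\|\prod_{s=r+1}^{k-1}M_m((s+1)\mathcal{B}-1:s\mathcal{B})-\mathcal{I}\|_\infty\le\Gamma\sigma^{k-1-r}$. Using $|(Av)_i|\le\|A\|_{\max}\|v\|_1$, the initialization piece is bounded by $\Gamma\sigma^k\sum_j|z_{jm}^0|$, and each gradient piece for $r\le k-2$ is bounded by $n\Gamma D\sigma^{k-1-r}\alpha_r$, where the factor $n$ reflects that $\mathbf{g}_m^{r\mathcal{B}}$ has at most $n$ non-zero entries, each of modulus $\le D$ by Assumption~\ref{assumptionAll}(iv). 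The terminal $r=k-1$ term requires a sharper direct argument: since $[(I-\mathcal{I})\mathbf{g}_m^{(k-1)\mathcal{B}}]_i=g_{im}^{(k-1)\mathcal{B}}-\tfrac{1}{n}\sum_j g_{jm}^{(k-1)\mathcal{B}}$ for $i\le n$ and $0$ otherwise, its infinity norm is at most $2D$, producing $2D\alpha_{k-1}$ in place of the loose $nD\alpha_{k-1}$.

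Finally, I aggregate the per-coordinate estimates over $m=1,\ldots,d$ via the $\ell_2$-triangle inequality. The initialization contribution collapses to $\sum_{j,m}|z_{jm}^0|$ through $\|\cdot\|_2\le\|\cdot\|_1$, while the gradient and terminal contributions admit coordinate-wise bounds independent of $m$, so their $\ell_2$-aggregations introduce the factor $\sqrt{d}$ visible in the statement. For a general $t=k\mathcal{B}-1+t'$ with $t'\in\{1,\ldots,\mathcal{B}\}$, the same argument applies after one extra within-block mixing $\bar{M}(t-1:k\mathcal{B})$; since $\bar{M}_m^s[\mathbf{1}^T,\mathbf{0}^T]^T=[\mathbf{1}^T,\mathbf{0}^T]^T$ one has $\bar{M}_m^s\mathcal{I}=\mathcal{I}$, so the centering commutes past the partial mixing and the geometric rate can be rewritten uniformly as $(\sigma^{1/\mathcal{B}})^{t-(t'-1)}$. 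The main obstacle will be carefully isolating the terminal ($r=k-1$) gradient contribution to avoid an artificial factor of $n$ there, together with the bookkeeping of the indicator $\mathbbm{1}_{t'=1}$, which signals that the separate $2\sqrt{d}D\alpha_{k-1}$ treatment of the freshest gradient is only needed exactly at the block boundary $t=k\mathcal{B}$ and is otherwise subsumed by the preceding geometric remainder.
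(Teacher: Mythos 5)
Your proposal follows essentially the same route as the paper's own proof: unroll the update into block products of the perturbed mixing matrices, use the column-sum-one property to center each product by $\mathcal{I}=\tfrac{1}{n}[\mathbf{1}^T\ \mathbf{0}^T]^T[\mathbf{1}^T\ \mathbf{1}^T]$ (which simultaneously yields $\bar z_m^{k\mathcal{B}}$ for the top block and $0$ for the bottom block, covering parts (a) and (b)), bound the centered products geometrically via Lemma~\ref{lemma0}, treat the freshest gradient term separately to obtain $2\sqrt{d}D\alpha_{k-1}$, aggregate over the $d$ coordinates with the $\ell_1$--$\ell_2$ norm inequalities, and handle intermediate times $t'>1$ by noting the within-block steps are pure mixing. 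The argument is correct and matches the paper's proof in all essential steps.
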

	

Consider the time step $t=k\mathcal{B}-1$ for some integer $k$ and rewrite the update \eqref{opti_update} as
\begin{equation}
\begin{aligned}
z_{im}^{t+1} &=\sum_{j=1}^{2n}[\bar{M}^t_m]_{ij} [Q(\z_{j}^t)]_m  + \mathbbm{1}_{\left\{t\mod \mathcal{B} = \mathcal{B}-1\right\}} \epsilon [F]_{ij} z_{jm}^{\lfloor t/\mathcal{B} \rfloor} - \mathbbm{1}_{\left\{t\mod \mathcal{B} = \mathcal{B}-1\right\}} \alpha_t g_{im}^{\lfloor t/\mathcal{B} \rfloor} \\
& = \sum_{j=1}^{2n}[\bar{M}^t_m]_{ij} z_{jm}^t + \mathbbm{1}_{\left\{t\mod \mathcal{B} = \mathcal{B}-1\right\}} \epsilon [F]_{ij} z_{jm}^{\lfloor t/\mathcal{B} \rfloor} - \mathbbm{1}_{\left\{t\mod \mathcal{B} = \mathcal{B}-1\right\}} \alpha_t g_{im}^{\lfloor t/\mathcal{B} \rfloor}.
\end{aligned}
\end{equation}
Establishing recursion, we obtain
\begin{equation}
\begin{aligned}
z_{im}^{k\mathcal{B}} & = 
    \sum_{j=1}^{2n}[M_m(k\mathcal{B}-1:0)]_{ij}z_{jm}^0 - \sum_{r=1}^{k-1}\sum_{j=1}^{2n}[M_m((k-1)\mathcal{B}-1:(r-1)\mathcal{B})]_{ij}\alpha_{r-1}g_{jm}^{r-1}\\ &-\alpha_{k-1}g_{im}^{(k-1)\mathcal{B}}.
\end{aligned}
\end{equation}
Using the fact that $M_m(s_2:s_1)$ has column sum equal to $1$ for all $s_2 \geq s_1 \geq 0$, we can represent $\bar{z}_m^{k\mathcal{B}}$ as
\begin{equation}
\begin{aligned}
    \bar{z}_m^{k\mathcal{B}} &= \frac{1}{n}\sum_{j=1}^{2n}z_{jm}^0-\frac{1}{n}\sum_{r=1}^{k-1}\sum_{j=1}^{2n}\alpha_{r-1} g_{jm}^{r-1}-\frac{1}{n}\sum_{j=1}^{n}\alpha_{k-1}g_{jm}^{(k-1)\mathcal{B}}.
\end{aligned}
\end{equation}
By combining the last two expressions,
\begin{equation}
\begin{aligned}
 \|\z_i^{k\mathcal{B}} - \bar{\z}^{k\mathcal{B}} \| 
& \leq \|\sum_{j=1}^{2n}([M_m(k\mathcal{B}-1:0)]_{ij}-\frac{1}{n}) z_{jm}^0 \| \\
&  + \|\sum_{r=1}^{k-1}\sum_{j=1}^{2n}([M_m((k-1)\mathcal{B}-1:(r-1)\mathcal{B})]_{ij}  -\frac{1}{n} )\alpha_{r-1}g_{jm}^{r-1} \| \\
& +  \|\alpha_{k-1}(\mathbf{g}_{i}^{(k-1)\mathcal{B}} - \frac{1}{n}
\sum_{j=1}^{n}\mathbf{g}_{j}^{(k-1)\mathcal{B}} 
) \| .
\end{aligned}
\end{equation}
The proof of part (a) is completed by summing $m$ from $1$ to $d$ and applying the results of Lemma \ref{lemma0}, recalling the fact that $\rho(\bar{M}_m(t-1:0))$ has non-zero spectral gap for all $m$, $t$, and invoking the relationship $\|\x \|_2 \leq \|\x \|_1 \leq \sqrt{d}\|\x \|_2$ for $\x \in \R^d$.
Proof of the first inequality in part (b) follows the same line of reasoning.

	
To show the correctness of the second inequality in both (a) and (b), we use the fact that for $t \mod \mathcal{B} \neq \mathcal{B}-1$, 
\begin{equation}
z_{im}^{t+1} =\sum_{j=1}^{2n}[\bar{M}^t_m]_{ij} [Q(\z_{j}^t)]_m = \sum_{j=1}^{2n}[\bar{M}^t_m]_{ij} z_{jm}^t
	\end{equation}
	and rewrite $k = \frac{t-(t'-1)}{\mathcal{B}}$. 
This concludes the proof of Lemma \ref{lemma3}.

\section{Proof of Theorem \ref{theorem3}}
	

Recall the update \eqref{opti_update}
and note that
$
\Bar{\z}^{(k+1)\mathcal{B}}=\Bar{\z}^{k\mathcal{B}}-\frac{\alpha_t}{n}\sum_{i=1}^n \nabla f_i(\z_i^{k\mathcal{B}}).
$
We thus have that
\begin{equation}
\begin{aligned}
\|\Bar{\z}^{k\mathcal{B}+k}-\x^* \|^2 = \|\frac{\alpha_k}{n}\sum_{i=1}^n \nabla f_i(\z_i^{k\mathcal{B}}) \|^2 +\|\Bar{\z}^{k\mathcal{B}}-\x^* \|^2-\frac{2\alpha_k}{n}\sum_{i=1}^n \langle \Bar{\z}^{k\mathcal{B}}-\x^*, \nabla f_i(\z_i^{k\mathcal{B}}) \rangle.
\end{aligned}
\end{equation}
On the other hand,
\begin{equation}
\begin{aligned}
\|\Bar{\z}^{t}-\x^* \|^2 
= \|\Bar{\z}^{k\mathcal{B}}-\x^* \|^2+\|\frac{\alpha_k}{n}\sum_{i=1}^n \nabla f_i(\z_i^{k\mathcal{B}}) \|^2  -\frac{2\alpha_k}{n}\sum_{i=1}^n \langle \Bar{\z}^{k\mathcal{B}}-\x^*, \nabla f_i(\z_i^{k\mathcal{B}}) \rangle
\end{aligned}
\end{equation}
for 
$t = k\mathcal{B}-1+t'$ and $t' = 1, \cdots, \mathcal{B}-1$. 
Therefore, 
\begin{equation}
	\begin{aligned}
	\|\Bar{\z}^{k\mathcal{B}+k}-\x^* \|^2 = \|\Bar{\z}^{k\mathcal{B}}-\x^* \|^2+\|\frac{\alpha_k}{n}\sum_{i=1}^n \nabla f_i(\z_i^{k\mathcal{B}}) \|^2  -\frac{2\alpha_k}{n}\sum_{i=1}^n \langle \Bar{\z}^{k\mathcal{B}}-\x^*, \nabla f_i(\z_i^{k\mathcal{B}})  \rangle.
	\end{aligned}
\end{equation}
Since $|g_{im}| \leq D$ and $D'=\sqrt{d}D$, and by invoking the convexity of $f$,
\begin{equation}
\begin{aligned}
\langle \Bar{\z}^{k\mathcal{B}}-\x^*, \nabla f_i(\z_i^{k\mathcal{B}})  \rangle  & = \langle \Bar{\z}^{k\mathcal{B}}-\z_i^{k\mathcal{B}}, \nabla f_i(\z_i^{k\mathcal{B}})  \rangle+  \langle \z_i^{k\mathcal{B}}-\x^*, \nabla f_i(\z_i^{k\mathcal{B}})  \rangle \\
	& \geq -D'\|\Bar{\z}^{k\mathcal{B}}-\z_i^{k\mathcal{B}} \|+f_i(\z_i^{k\mathcal{B}})-f_i(\Bar{\z}^{k\mathcal{B}}) +f_i(\Bar{\z}^{k\mathcal{B}})-f_i(\x^*)  \\
	& \geq -2D'\|\Bar{\z}^{k\mathcal{B}}-\z_i^{k\mathcal{B}} \|+f_i(\Bar{\z}^k\mathcal{B})-f_i(\x^*).
\end{aligned}
\end{equation}

Rearranging the terms above and summing from $t=0$ to $\infty$ completes the proof. 
	
\section{Proof of the convergence rate}


First we derive an intermediate proposition.
\begin{adjustwidth}{0.5cm}{0cm}\qquad
\begin{proposition}
For each $m$ and $ k$, the following inequalities hold:
\begin{enumerate}[(a)]
\item For $1 \leq i \leq n$, $\sum_{k=0}^{\infty} \alpha_k |z_{im}^{k\mathcal{B} }-\Bar{z}_m^{k\mathcal{B}}| < \infty$.
\item For $n+1 \leq i \leq 2n$, $\sum_{k=0}^{\infty} \alpha_k |z_{im}^{k\mathcal{B}}| < \infty$.
\end{enumerate}
\end{proposition}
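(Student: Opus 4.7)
The plan is to bound each of the two series termwise via Lemma~\ref{lemma3} and then apply a handful of standard summability arguments. Note first that for any $m$ we have the coordinate bound $|z_{im}^{k\mathcal{B}}-\bar z_m^{k\mathcal{B}}|\le \|\mathbf{z}_i^{k\mathcal{B}}-\bar{\mathbf{z}}^{k\mathcal{B}}\|$ (and analogously $|z_{im}^{k\mathcal{B}}|\le\|\mathbf{z}_i^{k\mathcal{B}}\|$ for $n+1\le i\le 2n$), so both parts (a) and (b) of the proposition reduce to showing that
\begin{equation*}
\sum_{k=0}^{\infty}\alpha_k\Bigl(C_1\sigma^{k}+C_2\sum_{r=1}^{k-1}\sigma^{k-r}\alpha_{r-1}+C_3\alpha_{k-1}\Bigr)<\infty,
\end{equation*}
where $C_1=\Gamma\sum_{j,m}|z^0_{jm}|$, $C_2=\sqrt d n\Gamma D$, $C_3=2\sqrt d D$, and $\sigma<1$ by Lemma~\ref{lemma0}.

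The three pieces would then be treated separately. For the geometric term $\sum_k\alpha_k\sigma^k$, I would invoke Cauchy--Schwarz together with Assumption~\ref{assumptionAll}(iii): $\sum_k\alpha_k\sigma^k\le\bigl(\sum_k\alpha_k^2\bigr)^{1/2}\bigl(\sum_k\sigma^{2k}\bigr)^{1/2}<\infty$. The $\alpha_k\alpha_{k-1}$ term is immediately controlled by $\alpha_k\alpha_{k-1}\le\tfrac12(\alpha_k^2+\alpha_{k-1}^2)$ and square-summability of the stepsizes. These two are quick.

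The main work is the convolution term $\sum_{k\ge 2}\alpha_k\sum_{r=1}^{k-1}\sigma^{k-r}\alpha_{r-1}$. Since all summands are non-negative, Tonelli justifies swapping the order of summation; setting $s=k-r$ I would rewrite it as $\sum_{s\ge1}\sigma^{s}\sum_{k\ge s+1}\alpha_k\alpha_{k-s-1}$, then apply the inequality $\alpha_k\alpha_{k-s-1}\le\tfrac12(\alpha_k^2+\alpha_{k-s-1}^2)$ to bound the inner sum by $\sum_{k}\alpha_k^2$ uniformly in $s$. The remaining factor $\sum_{s\ge1}\sigma^{s}$ is a convergent geometric series, which closes the bound.

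The hard part is really just the bookkeeping in that convolution step: one has to confirm that the Young/AM--GM bound $\alpha_k\alpha_{k-s-1}\le\tfrac12(\alpha_k^2+\alpha_{k-s-1}^2)$ indeed produces a shift-invariant bound on $\sum_{k\ge s+1}\alpha_k\alpha_{k-s-1}$ before the geometric decay in $s$ can absorb everything. No new probabilistic or spectral ingredient is needed beyond what Lemma~\ref{lemma3} and Assumption~\ref{assumptionAll}(iii) already provide, so summing the three pieces and combining with the norm inequalities $|\cdot|\le\|\cdot\|$ at each coordinate completes the argument for both (a) and (b).
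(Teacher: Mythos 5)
Your proposal is correct and follows essentially the same route as the paper: reduce to the three-term bound from Lemma~\ref{lemma3}, absorb the cross-products of stepsizes into squares via Young/AM--GM, and let the geometric decay in $\sigma$ handle the convolution, with square-summability of $\{\alpha_k\}$ closing each piece. Your explicit Tonelli reordering of the convolution term is in fact a cleaner write-up of the paper's own estimate (whose displayed final bound $\tfrac{1}{1-\sigma}\sum_k\alpha_k$ is evidently a typo for $\tfrac{1}{1-\sigma}\sum_k\alpha_k^2$), and the Cauchy--Schwarz versus AM--GM choice for the $\sum_k\alpha_k\sigma^k$ term is immaterial.
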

\begin{proof}
Using the result of Lemma \ref{lemma3}(a), for $1 \leq i \leq n$, \begin{equation}
	\begin{aligned}
	\sum_{k=1}^T\alpha_k \|\z_i^{k\mathcal{B}}-\Bar{\z}^{k\mathcal{B}} \| & \leq \Gamma (\sum_{j=1}^{2n}\sum_{s=1}^d|z^0_{js}|)\sum_{k=1}^T \alpha_k \sigma^k 
	+\sqrt{d} n\Gamma D\sum_{k=1}^T\sum_{r=1}^{k-1}\sigma^{t-r}\alpha_k \alpha_{r-1}\\& +2\sqrt{d}D \sum_{k=0}^{T-1}\alpha_k^2.
	\end{aligned}
	\end{equation} 
	Applying inequality $ab \leq \frac{1}{2}(a+b)^2, a, b \in \mathcal{R}$,
	\begin{equation}
	\sum_{k=1}^T \alpha_k \sigma^k \leq \frac{1}{2}\sum_{k=1}^T (\alpha_k^2+ \sigma^{2k}) \leq \frac{1}{2}\sum_{k=1}^T \alpha_k^2 + \frac{1}{1-\sigma^2}
	\end{equation}
	\begin{equation}
	\begin{aligned}
	\sum_{k=1}^T\sum_{r=1}^{k-1}\sigma^{k-r}\alpha_k \alpha_{r-1} \leq \frac{1}{2}\sum_{k=1}^T\alpha_k^2 \sum_{r=1}^{r-1} \sigma^{k-r}  + \frac{1}{2}\sum_{r=1}^{T-1}\alpha_{r-1}^2 \sum_{k=r+1}^{T} \sigma^{k-r} \leq \frac{1}{1-\sigma}\sum_{k=1}^T \alpha_k.
	\end{aligned}
	\end{equation}
	Using the assumption that the step size satisfies $\sum_{k=0}^{\infty} \alpha_t^2 < \infty$ as $T \to \infty$, we complete the proof of part (a).
	The same techniques can be used to prove part (b).
\end{proof}
\end{adjustwidth}
We can now continue the proof of the stated convergence rate. Since the mixing matrices have columns that sum up to one we have $\bar{z}^{k\mathcal{B}+t'-1} = \bar{k\mathcal{B}},$
for all $t' = 1, \cdots, \mathcal{B}$. 

In the following step, we consider $t = k\mathcal{B}$ for some integer $k \geq 0$. Defining $f_{\min}:=\mathrm{min}_t f(\Bar{\mathbf{z}}^t)$, we have
\begin{equation}\label{eq:temp:proof}
\begin{aligned}
(f_{\min}-f^*)\sum_{t=0}^T\alpha_t &\leq \sum_{t=0}^T\alpha_t(f(\Bar{\mathbf{z}}^t)-f^* )\leq C_1+C_2\sum_{t=0}^T\alpha_t^2,
\end{aligned}
\end{equation}
where 
\begin{equation}
    C_1 =\frac{n}{2}(\|\Bar{\mathbf{z}}^0-\mathbf{x}^*\|^2-\|\Bar{\mathbf{z}}_{T+1}-\mathbf{x}^* \|^2) +D'\Gamma \sum_{j=1}^{2n}\frac{\|\mathbf{z}_j^0\|}{1-\sigma^2},
\end{equation}
\begin{equation}
C_2=\frac{nD'^2}{2}+4D'^2+D'\Gamma \sum_{j=1}^{2n}\|\mathbf{z}_j^0\|+\frac{2D'^2\Gamma}{1-\sigma}.
\end{equation}
Note that we can express \eqref{eq:temp:proof} equivalently as
\begin{equation}\label{eq:temp:proof2}
(f_{\min}-f^*)\leq \frac{C_1}{\sum_{t=0}^T\alpha_t}+\frac{C_2 \sum_{t=0}^T\alpha_t^2}{\sum_{t=0}^T\alpha_t}.
\end{equation}
Now, by recalling the statement of Assumption \ref{assumptionAll}(iii), we have that $\alpha_t=o(1/\sqrt{t})$. If we select the schedule of stepsizes according to $\alpha_t=1/\sqrt{t}$, the two terms on the right hand side of \eqref{eq:temp:proof2} satisfies 
\begin{equation}
   \frac{C_1}{\sum_{t=0}^T\alpha_t} = C_1 \frac{1/2}{\sqrt{T}-1}=\mathcal{O}(\frac{1}{\sqrt{T}}),
    \frac{C_2 \sum_{t=0}^T\alpha_t^2}{\sum_{t=0}^T\alpha_t}=C_2 \frac{\mathrm{ln}T}{2(\sqrt{T}-1)}=\mathcal{O}(\frac{\mathrm{ln}T}{\sqrt{T}}). 
\end{equation}
This completes the proof.

	\bibliography{bib}
	\bibliographystyle{acm}

\end{document}